\newcommand{\uline}[1]{\underline{#1}}
\newcounter{tmpc}
\newcounter{cA}
\newcounter{cB}
\newcounter{cC}
\newcounter{cD}
\newcounter{cE}
\newcommand{\stacs}[1]{} 
\newtheorem*{3XOR}{\textbf {The 3XOR Principle}}
 \newtheorem{theorem}{Theorem}
\newtheorem{lemma}[theorem]{Lemma}
\newtheorem{definition}{Definition}
 \newtheorem{proposition}[theorem]{Proposition}
 \newtheorem{corollary}[theorem]{Corollary}
\newtheorem{fact}[theorem]{Fact}
\newtheorem{remark}[theorem]{Remark}
\def\squareforqed{\hbox{\rlap{$\sqcap$}$\sqcup$}}
\def\qed{\ifmmode\squareforqed\else{\unskip\nobreak\hfil
\penalty50\hskip1em\null\nobreak\hfil{\tt QED}
\parfillskip=0pt\finalhyphendemerits=0\endgraf}\fi}
\newcommand{\ignore}[1]{}
\newcommand {\cd}{\cdot}
\newcommand {\mar}[1]{
\marginpar{
                                                  \begin{minipage}{60pt}
                                             \tiny \color{white}#1
                                                  \end{minipage}
                                                       }
}
\newcommand {\ind} {\noindent}
\newcommand {\para}[1] {\paragraph{#1}}
\DeclareMathAlphabet{\mathitbf}{OML}{cmm}{b}{it}
\newcommand{\PTIME}{\textbf{P}}     
\newcommand{\NP}{\textbf{NP}}
\newcommand{\coNP}{\textbf{coNP}}
\newcommand{\Or}{\lor}
\newtheorem{theorem}{Theorem}[section]
\newtheorem{lemma}[theorem]{Lemma}
\newtheorem{proposition}[theorem]{Proposition}
\newtheorem{corollary}[theorem]{Corollary}
\newtheorem{definition}{Definition}[section]
\newenvironment{remark}{\HalfSpace\par\noindent{\bf Remark}:}{\HalfSpace}
\newenvironment{notation}{\QuadSpace\par\noindent{\bf Notation}:}{\HalfSpace}
\newenvironment{proof}{\QuadSpace\par\noindent{\bf Proof}:}{\EndProof\HalfSpace}
\newcommand{\QuadSpace}{\vspace{0.25\baselineskip}}
\newcommand{\HalfSpace}{\vspace{0.5\baselineskip}}
\newcommand{\EndProof}{ \hfill \vrule width 1ex height 1ex depth 0pt }
\def\RL{{\mbox{\rm R(lin)}}}
\def\RZ{{\mbox{\rm R$$(lin)}}}
\def\RZQ{{\mbox{\rm R$$(quad)}}}
\def\RQ{{\mbox{\rm R(quad)}}}
\newcommand{\TCZ}{{\rm TC}\ensuremath{^0}}
\definecolor{bluetxt}{rgb}{0,0,.5}
\definecolor{myred}{rgb}{0.6,0.0,0.1}
\definecolor{greentxt}{rgb}{0,.5,0}
\definecolor{redtxt}{rgb}{0.1,0.1,0.65}
\definecolor{purpletxt}{rgb}{0.6,0.1,0.7}
\definecolor{black}{rgb}{.0,.0,.0}
\definecolor{verydarkblue}{rgb}{.0,.0,.4}
\definecolor{darkblue}{rgb}{.0,.0,.55}
\definecolor{lightgray}{rgb}{.7,.7,.7}
\newlength{\defbaselineskip}
\author{Iddo Tzameret\footnote{Institute for Theoretical Computer Science, The Institute for Interdisciplinary Information Sciences (IIIS), Tsinghua University, Beijing \texttt{tzameret@tsinghua.edu.cn} ~~Supported in part by the National Basic Research Program of China Grant 2011CBA00300, 2011CBA00301, the National Natural Science Foundation of P.~R.~China; Grants 61033001, 61061130540, 61073174, 61373002.} \vspace{3pt}  \\ IIIS, Tsinghua University}
\date{}
\begin{document}

\title{Sparser Random 3-SAT Refutation Algorithms and the Interpolation Problem}
\maketitle

\begin{abstract}
We formalize a combinatorial principle, called \textit{the 3XOR principle}, due to Feige, Kim and Ofek \cite{FKO06}, as a family of  unsatisfiable propositional formulas  for which  refutations of small size in any propositional proof system that possesses the feasible interpolation property imply an efficient \emph{deterministic} refutation algorithm for random 3SAT with $ n $ variables and $ \Omega(n^{1.4}) $ clauses. Such small size refutations would improve the state of the art  (with respect to the clause density) efficient refutation algorithm, which works only for $\Omega(n^{1.5})$ many clauses \cite{FO07}.  

We demonstrate polynomial-size refutations  of the 3XOR principle in  resolution operating with disjunctions of   quadratic  equations with small integer coefficients, denoted  \RZQ ; this is a weak extension of cutting planes with small coefficients. We show that \RZQ\ is weakly automatizable iff \RZ\ is weakly automatizable, where \RZ\ is  similar to \RZQ\ but with linear instead of quadratic equations (introduced in \cite{RT07}).
This reduces the problem of  refuting random 3CNF with $ n $ variables and $ \Omega(n^{1.4}) $ clauses to the interpolation problem of \RQ\ and to the weak automatizability of \RL.
\end{abstract}

\section{Introduction}
In the well known \emph{random 3-SAT model} one usually considers a distribution on formulas in conjunctive normal form (CNF) with $ m $ clauses and three literals each, where each clause is chosen independently with repetitions out of all possible $ 2^3\cd {n \choose 3}$ clauses with $ n $ variables (cf.~\cite{Ach09:SAT-handbook}). The \emph{clause density} of such a 3CNF is $m/n$. When $ m $ is greater than $ cn $ for sufficiently large $ c $, that is, when the clause density is greater than $c$,  it is known (and easily proved for e.g.~$c\ge 5.2$) that with high probability a random 3CNF is unsatisfiable.

A \emph{refutation algorithm} for random $k$CNFs is an algorithm that receives a $ k $CNF (with sufficiently large clause density) and outputs either ``\texttt{unsatisfiable}'' or ``\texttt{don't know}''; if the algorithm answers ``\texttt{unsatisfiable}" then the $ k $CNF is required to be indeed unsatisfiable; moreover, the algorithm should output ``\texttt{unsatisfiable}" with high probability (namely,  with probability \(1-o(n)\) over the input $k$CNFs).

We can view the problem of determining the complexity of 
(deterministic) refutation algorithms as an average-case version of the \PTIME\ vs.~\coNP\  problem: a polynomial-time refutation algorithm for random $k$CNFs (for a small enough clause density) can be 
interpreted as showing that ``\PTIME\ = \coNP\  in the average-case''; while a polynomial-time \textit{nondeterministic}
refutation algorithm (again, for a small enough clause density) can be interpreted as ``\NP\ = \coNP\ in the average-case". 

Refutation algorithms for random $ k $CNFs were investigated in Goerdt and Krivelevich \cite{GK01} and subsequent works by Goerdt and Lanka \cite{GL03}, Friedman, Goerdt and Krivelevich \cite{FGK05}, Feige and Ofek \cite{FO07} and Feige \cite{Fei07} and \cite{COG+07} (among other works). For random 3CNFs, the best (with respect to the clause density) polynomial-time refutation algorithm to date works for formulas with  at least $ \Omega(n^{1.5}) $ clauses \cite{FO07}. On the other hand, Feige, Kim and Ofek \cite{FKO06} considered efficient \emph{nondeterministic} refutation algorithms; namely, short \emph{witnesses} for unsatisfiability of 3CNFs that can be checked for correctness in polynomial-time. They established the current best (again, with respect to the clause density) efficient, alas \textit{nondeterministic}, refutation procedure: they showed that with probability converging to $ 1 $ a random 3CNF with $ n $ variables and $\Omega(n^{1.4} )$ clauses has a witness of size polynomial in $n$.

Since the current state of the art random 3CNF refutation algorithm works for $\Omega(n^{1.5})$ clauses, while the best nondeterministic refutation algorithm works already for $O(n^{1.4})$,  determining whether a deterministic polynomial-time (or even a quasipolynomial-time) refutation algorithm for random 3CNFs with $ n $ variables and $\Omega(n^{1.4}) $ clauses exists is to a certain extent the frontier open problem in the area of efficient refutation algorithms.

\subsection{Results}

In this work we reduce the problem of devising an efficient deterministic refutation algorithm for random 3CNFs with $\Omega(n^{1.4}) $ clauses to the interpolation problem in  propositional proof complexity. For a refutation system $ \mathcal P$, the \textit{interpolation problem for $\mathcal{P}$} is the problem that asks, given a $\mathcal P$-refutation of an 
unsatisfiable formula $A(x,y)\land B(x,z),$ for $x,y,z$ mutually disjoint sets of variables, and an assignment $\alpha$ for $x$, to return $0$ or $1$, such that if the answer is $0$ then $A(\alpha,y)$ is unsatisfiable and if the answer is $1$ then $B(\alpha,z)$ is unsatisfiable. If the interpolation problem for a refutation system $\mathcal P$ is solvable in time $T(n)$ we say that $\mathcal P$ has \textit{interpolation in time \(T(n)\)}.\footnote{We 
do not distinguish in this paper between proofs and refutations: proof systems prove tautologies and refutation systems refute unsatisfiable formulas (or,  equivalently prove the negation of unsatisfiable formulas).}
%
%
When $ T(n) $ is a polynomial we say that \(\mathcal P\) has \textit{feasible interpolation}. The notion of feasible interpolation was proposed in \cite{Kra94-Lower} and developed further in 
\cite{Razb95-Unprovability,BPR97,Kra97-Interpolation}.

We present a family of unsatisfiable propositional formulas,
\textit{\textbf{denoted $\Upsilon_{n}$}} and called \emph{the 3XOR principle formulas}, expressing a combinatorial principle, such that for any given refutation system $\mathcal P$ that admits short refutations of $\Upsilon_{n}$, solving efficiently the interpolation problem for $\mathcal P$ provides an efficient \textit{deterministic} refutation algorithm for random 3CNFs with $ \Omega(n^{1.4}) $ clauses. In other words, we have the following:
\setcounter{cA}{\thetheorem}
\begin{theorem}\label{thm:main-in-intro}
If there exists a propositional proof system \(\mathcal P\) that has interpolation  in time \(T(n)\) and that admits \(s(n)\)-size refutations of $\Upsilon_n$, then there is a \textit{deterministic} refutation algorithm for random 3CNF formulas with \(n\) variables and $\Omega(n^{1.4})$ clauses that runs in  time \(T(s(n))\). In particular, if \(\mathcal P\) has feasible interpolation and admits polynomial-size refutations of \(\Upsilon_n\) then the refutation algorithm runs in polynomial-time.
\end{theorem}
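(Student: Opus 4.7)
The plan is to combine the assumed interpolation procedure for $\mathcal{P}$ with the Feige--Kim--Ofek (FKO) nondeterministic refutation scheme. For each 3CNF $\varphi$ on $n$ variables with $m=\Omega(n^{1.4})$ clauses, I would view $\Upsilon_n$ as naturally splitting into two sides: $A(u,y)$ asserting that $y$ is a valid FKO 3XOR witness for the unsatisfiability of the 3CNF whose clause-description is $u$, and $B(u,z)$ asserting that $z$ is a satisfying assignment of the 3CNF described by $u$. The 3XOR principle says precisely that no 3CNF admits both a 3XOR witness and a satisfying assignment, so $A\land B$ is unsatisfiable and is exactly an instance of $\Upsilon_n$. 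By hypothesis it has a $\mathcal{P}$-refutation of size $s(n)$.

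Given a concrete random $\varphi$, I would hardwire the $u$-variables to the description of $\varphi$. This produces an unsatisfiable formula $A(\varphi,y)\land B(\varphi,z)$ in disjoint sets of variables $y,z$, together with a $\mathcal{P}$-refutation of size $O(s(n))$ obtained from the refutation of $\Upsilon_n$ by restriction. Feeding this refutation to the interpolation algorithm for $\mathcal{P}$ costs time $T(s(n))$ and outputs a bit $b$. If $b=1$, the interpolation guarantee gives unsatisfiability of $B(\varphi,z)$, i.e.\ of $\varphi$, and the algorithm answers \texttt{unsatisfiable}; if $b=0$, it answers \texttt{don't know}. Soundness is immediate.

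For the high-probability guarantee, I appeal to the FKO result: with probability $1-o(1)$ a random 3CNF with $\Omega(n^{1.4})$ clauses possesses a polynomial-size 3XOR witness $y^*$. On such $\varphi$, $A(\varphi,y)$ is satisfiable, so the interpolant is forced to output $1$ (outputting $0$ would falsely claim $A(\varphi,y)$ unsatisfiable), and the algorithm correctly declares $\varphi$ unsatisfiable. Combining the two cases yields a deterministic algorithm meeting the refutation-algorithm specification.

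The principal obstacle is the formalization of $\Upsilon_n$: one must arrange the propositional encoding so that (i) the predicate ``$y$ is a valid FKO witness for the 3CNF coded by $u$'' is expressed by a polynomial-size formula with the appropriate variable partition into $A$-side and $B$-side, (ii) after restricting the $u$-variables to a concrete $\varphi$, the remaining formula has genuinely disjoint variable sets so that interpolation applies and produces a meaningful bit, and (iii) the refutation-size bound $s(n)$ survives the restriction, which is routine provided the encoding is uniform. Once $\Upsilon_n$ is set up this way, the rest of the argument is a clean composition of the interpolation hypothesis with the FKO witness-existence theorem.
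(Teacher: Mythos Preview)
Your outline has a genuine gap: you have misidentified the two sides of $\Upsilon_n$. In the paper, $B_n(\overline x,\overline z)$ does \emph{not} assert that $\overline z$ satisfies the 3CNF encoded by $\overline x$; it asserts that $\overline z$ satisfies at least $m-\lceil t/d\rceil$ clauses \emph{as 3XOR}. Likewise $A_n(\overline x,\overline y)$ does not encode the full FKO witness, only the existence of $t$ inconsistent even $k$-tuples with per-clause multiplicity at most $d$. The paper states this explicitly in the introduction: the disjoint \NP-pair used is \emph{not} $(P(x),\mathrm{SAT}(x))$.

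This breaks your soundness step. If the interpolant tells you $B_n(\varphi,\overline z)$ is unsatisfiable, you learn only that no assignment XOR-satisfies $m-\lceil t/d\rceil$ clauses of $\varphi$ (condition~3$'$ in Section~\ref{sec:describe-the-FKO-witness}); this by itself does \emph{not} imply $\varphi$ is unsatisfiable. The actual algorithm must, in addition, compute the polynomial-time parts of the FKO witness ($I(K)$, $M(K)$, the approximate top eigenvalue $\lambda$) and verify the spectral inequality in item~\ref{it:ineq-in-wit} of Definition~\ref{def:witness-with-3-items}. Only the conjunction of ``$B_n$ unsatisfiable'' with that inequality certifies unsatisfiability of $\varphi$. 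You also omit the loop over the parameters $k,t,d$: these are not fixed by $\varphi$, and the paper's algorithm runs the interpolant for each of the $O(n^3)$ admissible triples, declaring \texttt{unsatisfiable} only when some triple makes both the interpolant return the $B$-side and the spectral inequality hold.

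Your argument \emph{would} be correct for the alternative pair (full FKO witness, SAT), but that is a different formula from the paper's $\Upsilon_n$; the paper deliberately isolates only the nondeterministic ingredient (the $k$-tuples) precisely so that the short refutations in Theorem~\ref{thm:RZQ-proofs} need not formalize the spectral part of FKO.
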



The argument  is based on the following: we show that the \textit{computationally hard part} of the Feige, Kim and Ofek~nondeterministic refutation algorithm  (namely, the part we do not know how to efficiently compute deterministically) corresponds  to a disjoint \NP-pair. Informally, the pair $(\mathbf{A},\mathbf{B})$ of disjoint \NP\ sets is the following:  $\mathbf A$ is the set of 3CNFs that have  a certain combinatorial property, that is, they contain a collection of sufficiently many \emph{inconsistent even $k$-tuples}, as defined by Feige et al. (see Definition \ref{def:incons-even-k-tuple}); and  $\mathbf B$ is the set of 3CNFs with $m$ clauses for which there exists an assignment that satisfies more than $m-\ell$ clauses as 3XORs (for $\ell$ a certain function of the number of variables $n$).

Theorem \ref{thm:main-in-intro} then follows from the known relation between disjoint \NP-pairs and feasible interpolation
\cite{Razb94,Pud03-tcs}:
in short, if $\mathbf{A}$ and $\mathbf{B}$ are two disjoint \NP\ sets and  $A(x,y)$ and $B(x,z)$ are the two polynomial-size Boolean formulas corresponding to $\mathbf{A}$ and $\mathbf{B}$, respectively (i.e., for all $x$, there exists a short $y$ such that $A(x,y)=1$ iff $x\in \mathbf{A}$; and similarly for $\mathbf{B}$), then  short refutations of $A(x,y)\land B(x,z)$ imply a polynomial-size algorithm that separates $\mathbf{A}$ from $\mathbf{B}$. 
For more on the relation between disjoint \NP-pairs and propositional proof complexity see, e.g., \cite{Pud03-tcs,AB04}.

In general, we observe that every efficient refutation algorithm (deterministic or not) corresponds directly to a disjoint \NP-pair as follows:
%
%
every efficient refutation algorithm is based on some property $P$ of CNFs that can be witnessed (or better, found) in polynomial-time. Thus, every efficient refutation algorithm corresponds to a family of formulas  $P(x)\to $\,$\neg$SAT$(x)$, expressing that if the input CNF has the property $P$ then $x$ is unsatisfiable; thus, $P(x)$ and SAT$(x)$ are two disjoint \NP\ predicates. In the case of the refutation algorithm of Feige, Kim and Ofek, $P(x)$ expresses simply that the 3CNF $x$ has the Feige et al.~witness. \emph{However, the disjoint {\rm \NP}-pair $(\mathbf{A},\mathbf{B})$ we work with is not of this type}. Namely, $\mathbf{A}$ is  not the predicate $P(x)$ for the full Feige, Kim and Ofek witnesses, rather a specific combinatorial predicate (mentioned above) that is only one ingredient in the definition of the Feige et al.~witness; and $\mathbf{B}$ is not SAT$(x)$. This saves us the trouble to formalize and prove in a weak propositional proof system the full Feige et al.~argument (such a formalization was done recently in \cite{MT10}; see Sec.~\ref{sec:consequences} for a comparison with \cite{MT10}).
%
%
\smallskip 

In the second part of this paper (Section \ref{sec:encoding}  onwards) we reduce the problem of determinizing the Feige et al.~nondeterministic refutation algorithm to the interpolation problem of a concrete and apparently weak refutation system. Specifically, we demonstrate polynomial-size refutations for $\Upsilon_n$ in a refutation system denoted \RZQ\ that extends both the cutting planes with small coefficients\footnote{A refutation in \textit{cutting planes with small coefficients} is a restriction of cutting planes in which all intermediate inequalities are required to have coefficients bounded in size by a polynomial in $n$, where $n$ is the size of the formula to be refuted (see \cite{BPR97}).} (cf.~\cite{CCT87,BPR97,Pud97}) and Res(2) (for any natural $k$, the system Res($k$) is   resolution that operates with $k$DNFs instead of clauses, introduced by Kraj\'{i}\v{c}ek   \cite{Kra01-Fundamenta}). We note also that \RZQ\ is a subsystem of $\TCZ$-Frege. 

An \RZQ\ refutation  (see Section \ref{sec:define-proof-systems} for a formal definition) over the variables $\{x_1,\ldots,x_n\}$ operates with \emph{disjunctions} of quadratic equations, where each quadratic equation is of the form:
$$ \sum_{i,j\in [n]} c_{ij}x_{i}x_j + \sum_{i\in[n]}c_ix_i+c_0 =a, $$
in which all $c_i, c_{ij}$ and $a$ are integers written in unary representation. The system \RZQ\ has the following derivation rule, which can be viewed as a generalized resolution rule: from two disjunctions of quadratic equations $\bigvee_i L_i\lor (L=a)$ and  $\bigvee_j L_j \lor (L'=b)$ one can derive:\vspace{-7pt}
 $$\bigvee_i L_i \lor \bigvee_j L_j \lor (L-L'=a-b).$$

\vspace{-8pt}

\ind We also add axioms that force our variables to be $0,1$.
An \RQ\ refutation of an unsatisfiable set $S$ of disjunctions of quadratic equations is a sequence of disjunctions of quadratic equations (called \textit{proof-lines}) that terminates with $1=0,$ and such that every proof-line is either an axiom, or appears in $S,$ or is derived from previous lines by the derivation rules. 

We show the following:
\setcounter{cB}{\thetheorem}
\begin{theorem}\label{thm:RZQ-proofs}
\RZQ\ admits polynomial-size refutations of the 3XOR principle formulas $\Upsilon_n$.
\end{theorem}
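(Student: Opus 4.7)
The plan is to formalize the natural parity argument behind the 3XOR principle inside \RZQ. If a $k$-tuple of 3XOR equations is \emph{even} (every underlying variable appears in an even number of the equations of the tuple), then the $\mathbb Z$-sum of their left-hand sides $z_{i_1}+z_{i_2}+z_{i_3}$ is divisible by $2$. Under any Boolean assignment $z$, the quantity $z_{i_1}+z_{i_2}+z_{i_3}-b_i$ has the same parity as the violation indicator $v_i$ of the $i$'th 3XOR; summing over the tuple, the number of violated 3XORs in the tuple has the parity of $\sum_i b_i$, which is odd by \emph{inconsistency}. Hence at least one 3XOR per tuple is violated; pairwise \emph{disjointness} of the $N$ tuples then forces at least $N$ violated 3XORs overall, contradicting the $\Upsilon_n$ axiom that fewer than $N$ are violated.

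I would first record the variables and axioms of $\Upsilon_n$. The variables are the assignment $z=(z_1,\ldots,z_n)$; selectors $y_{i,j}$ indicating that the $i$'th 3XOR (with variables $z_{i_1},z_{i_2},z_{i_3}$ and RHS $b_i$) lies in the $j$'th tuple; violation indicators $v_i$; and auxiliary $\{0,1\}$-variables $u_i$ coupled to the other variables by the definitional axiom $z_{i_1}+z_{i_2}+z_{i_3}=b_i+v_i+2u_i$, which encodes that $v_i$ captures the parity of $z_{i_1}+z_{i_2}+z_{i_3}-b_i$. The key non-logical axioms are: (i) \emph{even-tuple}: for every $j,t$, the sum $\sum_{i:\,z_t\in\text{3XOR}_i} y_{i,j}$ is an even integer in $[0,k]$ (a disjunction of $O(k)$ linear equations); (ii) \emph{inconsistency}: $\sum_i y_{i,j}b_i$ is odd for every $j$; (iii) \emph{disjointness}: $\sum_j y_{i,j}\le 1$ for every $i$; (iv) \emph{few violations}: $\sum_i v_i\le N-1$.

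The refutation proceeds in two phases. \emph{Per-tuple parity.} For fixed $j$, form the quadratic expression $T_j:=\sum_i y_{i,j}(z_{i_1}+z_{i_2}+z_{i_3})$. Re-indexing gives $T_j=\sum_t z_t\cdot\sum_{i:\,z_t\in\text{3XOR}_i}y_{i,j}$, and invoking axiom (i) yields $T_j=2E_j$ for an explicit linear form $E_j$. On the other hand, multiplying the 3XOR definitional axiom by $y_{i,j}$ (a short \RZQ-derivation using $y_{i,j}\in\{0,1\}$) and summing over $i$ yields $T_j=\sum_i y_{i,j}b_i+\sum_i y_{i,j}v_i+2\sum_i y_{i,j}u_i$. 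Combining with $T_j=2E_j$ and axiom (ii) gives an equation expressing that $\sum_i y_{i,j}v_i$ is odd; together with the evident bound $0\le \sum_i y_{i,j}v_i\le k$ this produces the disjunction $\sum_i y_{i,j}v_i\ge 1$ (written as a polynomial-size disjunction of equations over $\{1,3,\ldots\}\cap[0,k]$). \emph{Counting.} Summing these inequalities over $j\in[N]$ gives $\sum_j\sum_i y_{i,j}v_i\ge N$; exchanging the order of summation and multiplying the disjointness axiom (iii) by $v_i$ (a quadratic step) to obtain $v_i\sum_j y_{i,j}\le v_i$, one concludes $\sum_i v_i\ge N$, contradicting axiom (iv).

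The main obstacle is the per-tuple parity step: it manipulates products $y_{i,j}z_t$, $y_{i,j}v_i$, and $y_{i,j}u_i$ that are intrinsically quadratic --- which is exactly why \RZQ\ rather than \RZ\ is needed. Each such product is derived by a short case-split on $y_{i,j}\in\{0,1\}$, and the oddness/non-negativity bounds on the count $\sum_i y_{i,j}v_i$ are encoded as polynomial-size disjunctions because all relevant quantities are integers in a range of size $O(k)$. All remaining steps are routine additive manipulations under the generalized resolution rule of \RZQ, so the total refutation has size polynomial in $n,m,N,k$, as required.
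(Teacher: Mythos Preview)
Your two-phase strategy --- a per-tuple parity argument showing each inconsistent even $k$-tuple forces a violated 3XOR, then a global count using bounded overlap --- is exactly the paper's approach, and your parity manipulation is essentially the paper's Step~1. There is, however, a genuine gap: the formula you refute is not the paper's $\Upsilon_n$.

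You treat the 3CNF as fixed: the indices $i_1,i_2,i_3$ and right-hand sides $b_i$ are constants, and your even-tuple axiom has a hardwired index set $\{i:z_t\in\text{3XOR}_i\}$. In the paper's $\Upsilon_n$ the 3CNF is itself encoded by the shared variables $\overline X$ (a $3m\times 2n$ literal-position table), because the whole point of the construction is the interpolation shape $A_n(\overline X,\overline Y)\land B_n(\overline X,\overline Z)$; if the 3CNF were hardcoded there would be nothing for an interpolant to compute. With $\overline X$ variable, your reindexing $T_j=\sum_t z_t\cdot\sum_{i:z_t\in\text{3XOR}_i}y_{i,j}$ becomes a sum of \emph{triple} products of $y$-, $x$-, and $z$-variables. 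The paper handles this by baking extension variables $\llbracket y^{(s)}_{jr}\cdot x_{\ell i}\rrbracket$ into the $\overline Y$-side axioms and $\llbracket x_{ji}\cdot z_\ell\rrbracket$ into the $\overline Z$-side axioms, and then spending the quadratic power of \RZQ\ on products such as $Q^{(s)}_{ijh}\cdot z_i$ and $u_i\cdot y^{(s)}_{ji}$. Your sketch also has no analogue of the bridging step (equation~(\ref{eq:041}) in the paper) that carries the per-tuple term $P_{j0s}+P_{j1s}+P_{j2s}$ back through the $\overline X$-encoding to the clause-level variables $u_i$; you avoid it only because your $v_i$'s refer to the clauses directly.

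A secondary discrepancy: you assume pairwise \emph{disjoint} tuples ($\sum_j y_{i,j}\le 1$) and the bound $\sum_i v_i\le N-1$, whereas the actual axioms allow $d$-fold overlap and bound the violations by $\lceil t/d\rceil-1$. Your counting step adapts in the obvious way, but the paper still needs a dedicated product lemma (Lemma~\ref{lem:r-quad-count}) to multiply the two range constraints and close the argument inside \RZQ. In short: your combinatorics is right and matches the paper, but the formalization you give is for a strictly simpler formula than $\Upsilon_n$, and most of the paper's work lies in the $\overline X$-indirection you have elided.
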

This polynomial upper bound  on the refutation size of the 3XOR principle is non-trivial because the encoding of the  3XOR formula is  complicated in itself and further the refutation system is very restrictive.  

By Theorem \ref{thm:main-in-intro}, we get the reduction from determinizing Feige et al.~work to the interpolation problem for \RQ. In other words:
%
%
\setcounter{cC}{\thetheorem}
\begin{corollary}
If \RZQ\ has feasible interpolation then there is a deterministic polynomial-time refutation algorithm for random 3CNFs with $n$ variables and $\Omega(n^{1.4})$ clauses.
\end{corollary}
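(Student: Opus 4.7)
The plan is to obtain the corollary as a direct composition of the two main theorems already stated, namely Theorem~\ref{thm:main-in-intro} and Theorem~\ref{thm:RZQ-proofs}. First I would instantiate $\mathcal P$ in Theorem~\ref{thm:main-in-intro} with the system \RZQ. The hypothesis of the corollary says that \RZQ\ has feasible interpolation, so the interpolation problem for \RZQ\ is solvable in time $T(n) = n^{O(1)}$ for some fixed polynomial.

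Next I would supply the refutation-size parameter $s(n)$ required by Theorem~\ref{thm:main-in-intro}. This is exactly where Theorem~\ref{thm:RZQ-proofs} is used: it guarantees that the 3XOR principle formulas $\Upsilon_n$ have \RZQ-refutations of polynomial size, so I may take $s(n) = n^{O(1)}$. With both hypotheses of Theorem~\ref{thm:main-in-intro} in place, the theorem delivers a \emph{deterministic} refutation algorithm for random 3CNFs with $n$ variables and $\Omega(n^{1.4})$ clauses running in time $T(s(n))$.

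Finally I would simply observe that $T(s(n)) = (n^{O(1)})^{O(1)} = n^{O(1)}$, since the composition of two polynomials is a polynomial. This is the claimed deterministic polynomial-time refutation algorithm. There is essentially no obstacle in the corollary itself: the genuine mathematical content is entirely offloaded onto Theorems~\ref{thm:main-in-intro} and~\ref{thm:RZQ-proofs}, the former encoding the 3XOR witness as a disjoint \NP-pair and invoking the Razborov--Pudl\'ak interpolation framework, and the latter carrying out the nontrivial \RZQ\ derivation for the combinatorially intricate formulas $\Upsilon_n$.
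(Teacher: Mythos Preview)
Your proposal is correct and matches the paper's approach exactly: the paper treats this corollary as an immediate consequence of Theorem~\ref{thm:main-in-intro} instantiated with $\mathcal P = \RZQ$ together with the polynomial-size \RZQ-refutations of $\Upsilon_n$ supplied by Theorem~\ref{thm:RZQ-proofs}, and does not even spell out the composition you wrote down.
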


Next we reduce the problem of determinizing the Feige et al.~refutation algorithm to the weak automatizability of a weaker system than \RQ, namely \RL, as explained in what follows. 

The concept of automatizability, introduced by Bonet, Pitassi and Raz \cite{BPR00} (following the work of \cite{KP98}),  is central to proof-search algorithms. The \textit{proof-search problem} for a refutation system $\mathcal P$ asks, given an unsatisfiable formula $\tau$, to find a $\mathcal P$-refutation of $\tau$. A refutation system $\mathcal P$ is \emph{automatizable} if for any unsatisfiable $\tau$ the proof-search problem for $\mathcal P$ is solvable in time polynomial in the smallest $\mathcal P$-refutation of $\tau$ (and equivalently, if there exists a polynomial-time algorithm that on input 
$\tau$ and a number $m$ in unary, outputs a $\mathcal P$-refutation of $\tau$ of size at most $m$ in, case such a refutation exists). 
Following Atserias and Bonet \cite{AB04}, we say that a refutation system $\mathcal P$ is \emph{weakly automatizable} if there exists an automatizable  refutation system $\mathcal P'$ that polynomially simulates $\mathcal P$.
Note that if $\mathcal P$ is not automatizable, it does \textit{not} necessarily follow  that also  $\mathcal P'$ is not automatizable. Hence, from the perspective of proof-search algorithms, weak automatizability is a more natural notion than automatizability (see \cite{Pud03-tcs} on this).

In \cite{RT07}, the system \RZ\ was introduced which is similar to \RZQ, except that all  equations are \textit{linear} instead of quadratic. In other words, \RL\ is resolution over linear equations with small coefficients. We show the following:
\setcounter{cD}{\thetheorem}
\begin{theorem}
\RZQ\ is weakly automatizable iff \RZ\ is weakly automatizable.
\end{theorem}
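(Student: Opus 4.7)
The plan is to prove the two directions separately. The forward direction, that weak automatizability of \RZQ\ implies weak automatizability of \RZ, is essentially free: a disjunction of linear equations is a disjunction of quadratic equations with all quadratic coefficients equal to zero, and the single subtraction rule acts identically in both systems; hence \RZQ\ $p$-simulates \RZ\ by the identity map. If $\mathcal Q$ is automatizable and $p$-simulates \RZQ, then by composition $\mathcal Q$ also $p$-simulates \RZ, witnessing weak automatizability of \RZ.

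For the non-trivial direction, I would linearize quadratic disjunctions via extension variables. Given a set $\phi$ of disjunctions of quadratic equations over $x_1,\ldots,x_n$, form $\tilde\phi$ by introducing a fresh variable $y_{ij}$ for each monomial $x_i x_j$ occurring in $\phi$, replacing every such occurrence by $y_{ij}$ to obtain a set of disjunctions of linear equations, and adjoining the three linear extension axioms $(y_{ij}{=}0)\lor(x_i{=}1)$, $(y_{ij}{=}0)\lor(x_j{=}1)$, and $(y_{ij}{=}1)\lor(x_i{=}0)\lor(x_j{=}0)$ that enforce $y_{ij}=x_i x_j$ on Boolean inputs (together with $y_{ii}=x_i$ for any squared monomial). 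Then $\phi$ and $\tilde\phi$ are equisatisfiable and $\tilde\phi$ is computable from $\phi$ in polynomial time.

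The technical core is to show that any \RZQ\ refutation of $\phi$ of size $s$ translates into an \RZ\ refutation of $\tilde\phi$ of size polynomial in $s$. Translate $\pi$ line by line: inside each proof-line substitute every $x_i x_j$ by $y_{ij}$, turning a disjunction of quadratic equations into a disjunction of linear equations. The subtraction inference is preserved verbatim because the combination $L-L'$ of the two substituted equations is again linear. Each \RZQ\ axiom translates either to an axiom already in $\tilde\phi$ (original clauses, linear Boolean axioms, or the extension axioms themselves) or, in the case of a quadratic Boolean axiom such as $x_i^2=x_i$, to $y_{ii}=x_i$, which is present by construction. Thus each original proof-step becomes a valid \RZ-step with at most constant size overhead, yielding an \RZ-refutation $\tilde\pi$ of $\tilde\phi$ of size polynomial in $s$.

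With this simulation in hand, suppose \RZ\ is weakly automatizable via an automatizable system $\mathcal Q$ that $p$-simulates \RZ. Define a proof system $\mathcal Q'$ by declaring a $\mathcal Q'$-refutation of $\phi$ to be a $\mathcal Q$-refutation of $\tilde\phi$, where the verifier recomputes $\tilde\phi$ from $\phi$ and then invokes $\mathcal Q$'s verifier. Soundness and completeness of $\mathcal Q'$ follow from equisatisfiability of $\phi$ and $\tilde\phi$; $\mathcal Q'$ is automatizable by running the $\mathcal Q$-automatizer on $\tilde\phi$; and $\mathcal Q'$ $p$-simulates \RZQ\ by composing the line-by-line translation with $\mathcal Q$'s simulation of \RZ. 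Hence \RZQ\ is weakly automatizable, completing the equivalence. The main obstacle is the technical claim of the third paragraph: a careful case-analysis against the exact axiom scheme of \RZQ\ is needed to verify that every axiom becomes, after the $y_{ij}$-substitution, a constant-size \RZ-derivation from the Boolean and extension axioms of $\tilde\phi$; this should go through because each such axiom is merely a Boolean tautology about products of bits, for which the extension axioms are tailor-made.
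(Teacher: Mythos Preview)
Your approach is correct and rests on the same core idea as the paper: linearize \RZQ\ by introducing an extension variable for each product $x_ix_j$, translate an \RZQ\ refutation line by line, and use equisatisfiability. The paper packages this slightly differently: rather than building the system $\mathcal Q'$ by hand, it invokes Pudl\'ak's criterion that a polynomial reduction between canonical \NP-pairs transfers weak automatizability, and then observes that $(\tau,1^m)\mapsto(\tilde\tau,1^{{\rm poly}(m)})$ is such a reduction. Your direct construction of $\mathcal Q'$ is just an unfolding of that argument and is equally valid.

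Two small technical points. First, you introduce $y_{ij}$ only for monomials $x_ix_j$ \emph{occurring in $\phi$}, but an \RZQ\ refutation may use Weakening or the Boolean axioms $x_i+x_j-x_ix_j\in\{0,1\}$ and $x_i-x_ix_j\in\{0,1\}$ for arbitrary pairs $(i,j)$, so the translated proof can mention products absent from $\phi$. The fix is to add $y_{ij}$ (and its defining axioms) for \emph{all} $i,j\in[n]$; this is still polynomial in $|\phi|$ since $n\le|\phi|$. Second, the paper chooses as extension axioms precisely the lines $x_i-y_{ij}\in\{0,1\}$, $x_j-y_{ij}\in\{0,1\}$, and $x_i+x_j-y_{ij}\in\{0,1\}$, i.e., the \emph{substituted} \RZQ\ Boolean axioms themselves; this makes the ``Boolean-axiom translation'' step literally trivial. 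Your clause-style axioms $(y_{ij}{=}0)\lor(x_i{=}1)$ etc.\ are inter-derivable with these in \RZ\ by short case analysis, so your version also works, but you should either carry out that derivation or simply adopt the paper's choice of axioms to avoid the extra verification you flag in your last paragraph.
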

The proof of this theorem follows a similar argument to Pudl\'ak \cite{Pud03-tcs}. Since weak automatizability of a proof system implies that the proof system has feasible interpolation \cite{BPR00,Pud03-tcs}, we obtain the following:
\setcounter{cE}{\thetheorem}
\begin{corollary}
If \RZ\ is weakly automatizable then there is a deterministic refutation algorithm for random 3CNFs with $n$ variables and $\Omega(n^{1.4})$ clauses.
\end{corollary}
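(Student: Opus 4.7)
The plan is to chain the earlier results of the paper together; the corollary is essentially a bookkeeping consequence of what has already been established. First I would assume \RZ\ is weakly automatizable. By the preceding theorem (the iff between weak automatizability of \RZQ\ and of \RZ), this immediately gives that \RZQ\ is weakly automatizable as well. The next step invokes the standard implication of \cite{BPR00,Pud03-tcs} that was explicitly cited right before the corollary statement: weak automatizability of any refutation system yields feasible interpolation for that system. Applying it to \RZQ, we conclude that \RZQ\ has feasible interpolation, i.e.\ interpolation in time polynomial in the refutation size.

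Now I would combine this with Theorem \ref{thm:RZQ-proofs}, which supplies polynomial-size \RZQ\ refutations of the 3XOR principle formulas $\Upsilon_n$. With $\mathcal{P}=\RZQ$, interpolation time $T(n)=n^{O(1)}$, and refutation size $s(n)=n^{O(1)}$, the hypotheses of Theorem \ref{thm:main-in-intro} are satisfied, yielding a deterministic refutation algorithm for random 3CNFs with $n$ variables and $\Omega(n^{1.4})$ clauses whose running time is $T(s(n))=n^{O(1)}$. That is exactly the conclusion sought.

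There is no serious obstacle here beyond quoting the ingredients in the correct order: the content lives in the earlier theorems, and no extra combinatorial or proof-theoretic work is needed. The only mildly delicate point worth stating carefully in the write-up is the weak automatizability $\Rightarrow$ feasible interpolation step, since weak automatizability is defined through the existence of \emph{some} automatizable system $\mathcal{P}'$ that simulates $\mathcal{P}$; but as noted in \cite{Pud03-tcs}, feasible interpolation transfers along polynomial simulations, so feasible interpolation for $\mathcal{P}'$ (which follows from its automatizability together with the standard Boolean-formula encoding of disjoint $\NP$-pairs) hands back feasible interpolation for $\mathcal{P}=\RZQ$. Once this is observed, the corollary is immediate.
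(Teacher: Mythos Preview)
Your proposal is correct and follows exactly the same route as the paper: the paper's justification is the single sentence preceding the corollary, which chains weak automatizability of \RZ\ $\Rightarrow$ weak automatizability of \RZQ\ (via the preceding theorem) $\Rightarrow$ feasible interpolation of \RZQ\ (via \cite{BPR00,Pud03-tcs}) $\Rightarrow$ the refutation algorithm (via Theorems \ref{thm:RZQ-proofs} and \ref{thm:main-in-intro}). Your write-up simply makes these steps explicit, and the extra remark on how feasible interpolation transfers along polynomial simulations is a welcome clarification rather than a deviation.
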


\subsection{Consequences and relations to previous work}\label{sec:consequences} The key point of this work is the relation between constructing an efficient refutation algorithm for the clause density  $\Omega(n^{0.4})$ to proving upper bounds in weak enough propositional proof systems for the 3XOR principle (namely, proof systems possessing feasible interpolation); as well as establishing such upper bounds in relatively week proof systems.

There are two ways to view our results: either  as \textbf{(i)} proposing an approach to improve the current state of the art in refutation algorithms via proof complexity upper bounds; or conversely as \textbf{(ii)} providing a \textit{new kind} of important computational consequences that  will follow from feasible interpolation and weak automatizability of weak proof systems. Indeed, the consequence  that we provide is  of a different kind from the group of  important recently discovered  algorithmic-game-theoretic consequences  shown by Atserias and Maneva \cite{AM10}, Huang and Pitassi \cite{HP11} and Beckmann, Pudl\'{a}k and Thapen \cite{BPT14}. In what follows we explain these two views  in more details. \medskip         

\textbf{(i)} Our results show that by proving that \RQ\ has feasible interpolation or by demonstrating a short refutation of the 3XOR principle in some refutation system that admits feasible interpolation, one can advance the state of the art in refutation algorithms. We can hope that if feasible interpolation of \RQ\ does not hold, perhaps interpolation in quasipolynomial-time holds (either for \RQ\ or for any other system admitting short refutations of the 3XOR principle), which would already improve exponentially the running time of the current best deterministic refutation algorithm for 3CNFs with $\Omega(n^{1.4})$ clauses, since the current algorithm works in time $2^{O(n^{0.2}\log n)}$ \cite{FKO06}.

As mentioned above, \RQ\ is a common extension of Res(2) and cutting planes with small coefficients (though it is apparently not the weakest such common extension because already \RL\ polynomially simulates both Res(2) and cutting planes with small coefficients).   Whether Res(2) has feasible interpolation (let alone, interpolation in quasi-polynomial time) is  open and there are no conclusive evidences for or against it. Note that by Atserias and Bonet \cite{AB04}, Res(2) has feasible interpolation iff resolution is weakly automatizable. However this does not necessarily constitute a strong evidence against the feasible interpolation of Res(2), because the question of whether resolution is \textit{weakly} automatizable is itself open, and there is no strong evidence ruling out a positive answer to this question\footnote{It is known that, based on reasonable hardness assumptions from parameterized complexity, resolution is not \textit{automatizable} by Alekhnovich and Razborov 
\cite{AR08}, which is, as the name indicates, a stronger property than weak automatizability.}. Similarly, there are no strong evidences that rule out the possibility that cutting planes is weakly automatizable.
\smallskip 

\textbf{(ii)} Even if our suggested approach is not expected to lead to an improvement in refutation algorithms, it is still interesting in the following sense. The fact that \RZQ\ has short refutations of the 3XOR principle provides a new evidence that (weak extensions of)   Res(2) and cutting planes with small coefficients may not have feasible interpolation, or at least that it would be highly non-trivial to prove they do have feasible interpolation; the reason for this is that establishing the feasible interpolation for such proof systems would entail quite strong algorithmic consequences, namely, a highly non-trivial improvement in refutation algorithms. 
This algorithmic consequence adds to other recently discovered and important algorithmic-game-theoretic consequences that would follow from feasible interpolation of weak proof systems. 

Specifically, in recent years several groups of researchers discovered connections between feasible interpolation and weak automatizability of small depth Frege systems to certain game-theoretic algorithms: Atserias and Maneva \cite{AM10} showed that solving \textit{mean payoff games }is reducible to the weak automatizability of depth-2 Frege (equivalently, Res($n$)) systems and to the feasible interpolation of depth-3 Frege systems (actually, depth-3 Frege where the bottom fan-in of formulas is at most two). Subsequently, Huang and Pitassi \cite{HP11} showed that if depth-3 Frege system is weakly automatizable, then \textit{simple stochastic games }are solvable in polynomial time.
 Finally, Beckmann, Pudl\'{a}k and Thapen \cite{BPT14} showed that weak automatizability of resolution implies a polynomial-time algorithm for the \textit{parity game}.


\para{Comparison with  M\"uller and Tzameret \cite{MT10}.}\label{sec:compare-MT12}

In \cite{MT10} a polynomial-size \TCZ-Frege proof of the correctness of the Feige et al.~witnesses was shown. However the goal of \cite{MT10} was different from the current paper. In \cite{MT10} the goal was to construct short propositional refutations for random 3CNFs (with sufficiently low clause density). Accordingly, the connection to the interpolation  problem was not made in \cite{MT10}; and further, it is known by \cite{BPR00} that  \TCZ-Frege does not admit feasible interpolation (under cryptographical assumptions).  On the other hand, this paper aims to demonstrate that certain short refutations will  have \textit{algorithmic }consequences (for  refutation algorithms). Indeed, since we are not interested here to prove the correctness of the full Feige et al.~witnesses, we are isolating the computationally hard part of the witnesses from the easy (polytime computable) parts, and formalize the former part (i.e., the 3XOR principle) as a propositional formula in a way that is  suitable for the reduction to the interpolation problem. 

One advantage of this work over \cite{MT10} is that Theorem \ref{thm:RZQ-proofs} gives a more concrete logical characterization of parts of the Feige et al.~witnesses (because the proofs in \cite{MT10} were conducted indirectly, via a general translation from first-order proofs in  bounded arithmetic), and this characterization is possibly \textit{tighter} (because \RQ\ is 
apparently strictly weaker than \TCZ-Frege).  \section{Preliminaries}


Let $ F $ be a 3CNF with $n $ variables $ X =\{x_1,\ldots,x_n\}$ and $ m $ clauses.
We denote $ \{1,\ldots,n\}$ by $[n]$. The truth value of a formula $G$  under the Boolean assignment $ A $ is written $ G(A)$. An assignment $ A $ \textit{satisfies as a 3XOR} thew clause $\ell_1 \lor \ell_2 \lor \ell_3$
if $(\ell_1 \oplus \ell_2 \oplus \ell_3)(A)=1$ (where $\oplus $ denotes the XOR operation, and the $ \ell_i$'s are literals, namely variables or their negation).

\subsection{Disjoint NP-pairs and feasible interpolation of  propositional proofs}\label{sec:disj-NP-pair-recall}

In this section we review
the notion of a disjoint \NP-pair and its relation to propositional proofs and the feasible interpolation property.

A \textit{disjoint \NP-pair }is simply a pair of languages in \NP\ that are disjoint. Let \(L,N\) be a disjoint \NP-pair such that \(R(x,y)\) is the corresponding relation for \(L\) and \(Q(x,z)\) is the corresponding relation for \(N\); namely, there exists  polynomials $p,q$ such that  \(R(x,y)\) and \(Q(x,z)\) are polynomial-time relations where  \(x\in L\) iff \(\exists y, |y|\le p(|x|) \land  R(x,y)=\tt{true}\) and  \(x\in N\) iff \(\exists z, |z|\le q(|x|)\land  Q(x,z)=\tt{true}\).

Since both polynomial-time  relations \(R(x,y)\) and \(Q(x,z)\) can be converted into a family of polynomial-size Boolean circuits, they can be written as a family of polynomial-size (in $n$) CNF formulas (by adding extension variables, that we  may assume are incorporated in the certificates $y$ and $ z$).
Thus, let \(A_{n}(\overline x,\overline y)\) be a polynomial-size CNF in the variables  $\overline x =(x_1,\ldots,x_n )$ and $ \overline y =(y_1,\ldots,y_{\ell} ),$ that is true iff \(R(\overline x,\overline y)\) is true, and let \(B_{n}(\overline x,\overline z)\) be a polynomial-size CNF in the variables $\overline x$ and $\overline z=(z_1,\ldots, z_{m}),$ that is true iff \(Q(\overline x,\overline z)\) is true (for some $\ell, m$ that are polynomial in $n$). For every $n\in\mathbb{N}$, we define the following unsatisfiable CNF formula in three  mutually disjoint vectors of variables $\overline x, \overline y, \overline  z$:
\begin{equation}\label{eeeeq:A_and_B}
           ~~~~~~~    F_n:=A_{n}(\overline x ,\overline y)\land B_{n}(\overline x,\overline z).
\end{equation}
Note that because $\overline y$ and $\overline z$ are disjoint vectors of variables  and $A_{n}(\overline x ,\overline y)\land B_{n}(\overline x,\overline z)$ is unsatisfiable, it must be that given any $\overline x \in \{0,1\}^n$, either $A_{n}(\overline x ,\overline y)$ or $ B_{n}(\overline x,\overline z)$ is unsatisfiable (or both).
\mar{}

\para{Feasible interpolation.}
We use standard notions from the theory of propositional proof complexity (see \cite{BP98,Seg_BSL07,CK02,Kra95} for surveys and introductions to the field).
In particular, we sometimes mix between refutations (that is, proofs of unsatisfiability of a formula) and proofs (that is, proofs of  tautologies). From the perspective of proof complexity refutations of contradictions and proofs of tautologies are for most purposes the same.

A \emph{propositional proof system}
 $\mathcal  P$ is a polynomial-time relation $V(\pi,\tau)$ such that for every propositional formulas $\tau$ (encoded as binary strings in some natural way), $\tau$ is a tautology iff there exists a binary string $\pi$ (the supposed ``proof of $\tau$") with $V(\pi,\tau)=\texttt{true}$. (Note that $|\pi|$ is not necessarily polynomial in $|\tau|$.) A propositional proof system $\mathcal P$ \textit{polynomially-simulates} another propositional proof system $\mathcal Q$ if there is a polynomial-time
computable function $f$ that maps $\mathcal Q$-proofs to $\mathcal P$-proofs of the same tautologies.

Consider a family of unsatisfiable formulas $F_n:=A_{n}(\overline  x,\overline  y)\land B_{n}(\overline  x,\overline  z)$, $i\in\mathbb{N}$,  in mutually disjoint vectors of variables, as in (\ref{eeeeq:A_and_B}) above. We say that the Boolean function $f(\overline x)$ is \emph{the interpolant of $F_n$} if for every $n$ and every assignment $\overline \alpha$ to $\overline x$:
\begin{equation}\label{eq:interp-definition}
                \begin{array}{lll}
                                f(\overline  \alpha ) = 1 \quad & \Longrightarrow &\quad A_n(\overline  \alpha ,\overline  y)
                                \,\,\,{\rm{is}}\,{\rm{unsatisfiable}}; {\rm ~~and}\\
                                f(\overline  \alpha ) = 0 \quad & \Longrightarrow  &\quad B_n(\overline  \alpha ,\overline  z)
                                \,\,\,{\rm{is}}\,{\rm{unsatisfiable}}{\rm{.}}
                \end{array}
\end{equation}

In other words, if only $A_n(\overline \alpha, \overline y)$ is unsatisfiable (meaning that $B_n(\overline \alpha, \overline z)$ is satisfiable) then $f(\overline \alpha)=1,$ and if only $B_{n}(\overline \alpha, \overline z)$ is unsatisfiable (meaning that $A_{n}(\overline \alpha, \overline z)$ is satisfiable) then $f(\overline \alpha)=0$, and if both $A_{n}(\overline \alpha, \overline y)$ and $B_{n}(\overline \alpha, \overline z)$ are unsatisfiable then $f(\overline \alpha)$ can be either $0$ or $1$\stacs{\!}.
Note that  $L$ (as defined above) is precisely the set of those assignments $\overline  \alpha$
for which $A(\overline  \alpha, \overline  y)$ is satisfiable,
and $N$ is precisely the set of those assignments $\overline  \alpha$ for which
$B(\overline  \alpha, \overline  z)$ is satisfiable,
and $L$ and $N$ are disjoint by assumption, and so $f(\overline  x)$ \textit{separates} $L$ from $N$; namely, it outputs different values for those elements in $L$ and those elements in $N$.

\begin{definition}[Interpolation property]\label{def:interpol}
A propositional proof system $\mathcal P$  is said to have the \emph{interpolation property in time $T(n)$} if the existence of a size $s(n)$  $\mathcal P$-refutation of a family $F_n$ as in (\ref{eeeeq:A_and_B}) above implies the existence of an algorithm computing $f(\overline x)$ in $T(s(n))$ time.
When a proof system $\mathcal P$ has the interpolation property in time ${\rm poly}(n)$ we say that $\mathcal P$ has the \emph{feasible interpolation property}, or simply that $\mathcal P$ has \emph{feasible interpolation}.
\end{definition}

\todo{\tiny Put the above parag IN full VERSION!}

\subsection{Refutation algorithms} 
We repeat here the definition given in the introduction.
The distribution of \textit{random  3CNF formulas }with $n$ variables and $m$ clauses is defined by choosing $ m $ clauses with three literals each, where each clause is chosen independently with repetitions out of all possible $ 2^3\cd {n \choose 3}$ clauses with $ n $ variables. A \textit{refutation algorithm }for random 3CNFs is an algorithm $\mathsf A$ with input a 3CNF and two possible outputs ``\texttt{unsatisfiable}'' and ``\texttt{don't know}'', such that (i) if on input $C$, $\mathsf A$ outputs ``\texttt{unsatisfiable}'', then $C$ is unsatisfiable; and (ii) for any $n$, with probability at least $1-o(1)$ $\mathsf{A}$ outputs ``\texttt{unsatisfiable}'' (where the probability is considered over the distribution of random 3CNFs with $n$ variables and $m$ clauses, and where $o(1)$ stands for a term that converges to 0 when $n$ tends to infinity).
\section{The 3XOR principle}

The following definitions and proposition are due to Feige et al.~\cite{FKO06}.

\begin{definition}[Inconsistent even $ k $-tuple]\label{def:incons-even-k-tuple}
An \emph{even $ k $-tuple} is a tuple of $ k $ many $ 3 $-clauses in which every variable appears even times.
An \emph{inconsistent even $ k $-tuple} is an even $ k $-tuple in which the total number  of negative literals is odd.
\end{definition}
Note that for any even $ k$-tuple, $ k $ must be an even number (since by assumption the total number of variables occurrences $ 3k $ is even).
The following is the combinatorial principle, due to Feige et al.~\cite{FKO06} that we consider in this work:
\begin{3XOR}
Let $K$ be a 3CNF over the variables $X$. Let $ S $ be $ t $ inconsistent even $ k $-tuples from $K$, such that every clause from $ K $ appears in at most $ d $ inconsistent even $ k $-tuples in $ S $. Then, given any Boolean assignment to the variables $ X ,$ the number of clauses in $ K $ that are  unsatisfied by the assignment as 3XOR is at least $ \lceil t/d \rceil $.
\end{3XOR}

The correctness of the 3XOR principle follows directly from the following proposition and the fact that every clause in $ K $ appears in at most $ d $ even $k$-tuples in $ S $:

\begin{proposition}[\cite{FKO06}]\label{prop:3xor-princ}  For any inconsistent even $ k $-tuple (over the variables $X$) and any Boolean assignment $ A $ to $ X$, there must be a clause in the $ k $-tuple that is unsatisfied as 3XOR.
\end{proposition}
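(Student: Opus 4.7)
The plan is a short parity argument. Write each clause $c_j = \ell_1^{(j)} \lor \ell_2^{(j)} \lor \ell_3^{(j)}$ and recall that $c_j$ is satisfied as 3XOR by $A$ iff $\ell_1^{(j)}(A) \oplus \ell_2^{(j)}(A) \oplus \ell_3^{(j)}(A) = 1$. Each literal is either a variable $x$ or $\neg x = 1\oplus x$, so for every clause
$$ \ell_1^{(j)}(A)\oplus\ell_2^{(j)}(A)\oplus\ell_3^{(j)}(A) \;=\; \bigl(n_1^{(j)}\oplus n_2^{(j)}\oplus n_3^{(j)}\bigr)\;\oplus\;\bigl(x_{v_1^{(j)}}(A)\oplus x_{v_2^{(j)}}(A)\oplus x_{v_3^{(j)}}(A)\bigr), $$
where $n_i^{(j)}\in\{0,1\}$ indicates whether the literal is negated and $v_i^{(j)}$ is its underlying variable.

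Assume for contradiction that every clause of the $k$-tuple is satisfied as 3XOR under $A$. XOR-ing the above identity over all $j\in[k]$ and $i\in[3]$, the left-hand side sums to $k \bmod 2$, which is $0$ since $k$ is even. For the right-hand side, the ``variable part'' $\bigoplus_{j,i} x_{v_i^{(j)}}(A)$ vanishes because, by the even $k$-tuple condition, every variable appears an even number of times, so each $x_v(A)$ is XORed in an even number of times and cancels. What remains on the right is $\bigoplus_{j,i} n_i^{(j)}$, which is the parity of the total number of negative literals in the $k$-tuple; by inconsistency this parity is $1$. Thus $0=1$, a contradiction.

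There is no real obstacle here: the whole argument is one XOR computation, and the two hypotheses (even-tuple and inconsistent) are used exactly once each to kill the variable contributions and to force the residual negation parity to $1$. The only thing to be careful about is the bookkeeping that the literal XOR decomposes cleanly into a ``negation parity'' part and a ``variable parity'' part, which is immediate from $\neg x = 1\oplus x$ and the associativity/commutativity of $\oplus$.
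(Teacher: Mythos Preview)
Your proof is correct and is essentially the same parity argument as the paper's: both contradict by computing the XOR of all literal values in two ways (clause-wise giving $0$ since $k$ is even, literal-wise giving $1$ from the odd negation count and the cancellation of variable contributions). Your decomposition $\ell = n \oplus x$ is a slightly cleaner bookkeeping device than the paper's case analysis on variables with odd versus even numbers of negative occurrences, but the content is identical.
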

The proof follows a simple counting modulo 2. For completeness we prove this proposition.
\begin{proof}
Assume by a way of contradiction that for some assignment $ A $ every clause from the $ k $-tuple is satisfied as a 3XOR and recall that $ k $ must be even. Thus, if we sum modulo 2 all the literals in the $ k $-tuple \textit{via clauses}, then since $ k $ is even we get that the sum equals $ 0 $ modulo 2.

On the other hand, if we count \textit{via variables} then  summing modulo 2 all literals $ \ell_i(A) $ in the $k$-tuple, we get 1 (modulo 2), for the following reason. First, we sum all variables $ x_i $ that have odd number of negative occurrences. Because $ x_i $ appears an even number of times in the $k$-tuple, the number of positive occurrences of $ x_i $ is also odd. So in total all occurrences of $  x_i ( A) $ and \(\neg x_i(A)\) contribute $ 1 $ to our sum (modulo 2). There must be an odd number of such variables $ x_i $ in our $ k $-tuple because the $ k $-tuple is \emph{inconsistent}. Thus this sums up to $ 1  $ (modulo $2$).
Then we add to this sum those variables that have an even number of negative occurrences (and hence also an even number of positive occurrences); but they cancel out when summing their values under $ A $ modulo 2, and so they contribute 0 to the total sum. Hence, we get $ 1 $ as the total sum. This contradicts the  counting in the previous paragraph which turned out 0.
\end{proof}

\section{From short proofs to refutation algorithms}

In this section we demonstrate that polynomial-size proofs of (encodings of the) 3XOR principle in a proof system that has the feasible interpolation property yield  deterministic polynomial-time refutation algorithms for random 3CNF formulas with $\Omega(n^{1.4})$ clauses.

\subsection{The witness for unsatisfiability}
\label{sec:describe-the-FKO-witness}

Feige, Kim and Ofek nondeterministic refutation algorithm \cite{FKO06} is based on the existence of a polynomial-size witness of unsatisfiability for most 3CNF formulas with sufficiently large clause to variable ratio. The witness has several parts, but as already observed in \cite{FKO06},  apart from the $t$ inconsistent even $k$-tuples (Definition \ref{def:incons-even-k-tuple}), all the other parts of the witness  are known to be computable in polynomial-time.
In what follows we define the witnesses for unsatisfiability.

Let $K$ be a 3CNF with $n$ variables $ x_1,\ldots,x_n $ and $m$ clauses. The \emph{imbalance} of a variable $ x_i $ is the absolute value of the difference between the  number of its positive occurrences and the number of its negative occurrences. The \emph{imbalance of $K$} is the sum over the imbalances of all
 variables, in $K$, denoted $I(K)$. We define $M(K)$ to be an $ n \times n $ rational matrix $M$ as follows: let $i,j\in[n]$, and let $d$ be the number of clauses in $K$ where $x_i$ and $x_j$ appear with  different signs and $s$ be the number of clauses where $x_i$ and $x_j$ appear with the same sign.  Then $M_{ij}:=\frac{1}{2}(d-s)$. In other words, for each clause in $K$ in which $x_i$ and $x_j$ appear with the same sign we add $\frac{1}{2}$ to \(M_{ij}\) and  for each clause in $K$ in which $x_i$ and $x_j$ appear with different signs we subtract $\frac{1}{2}$ from \(M_{ij}\).
Let $\lambda $ be a rational approximation of the biggest  eigenvalue of $M(K)$. We shall assume that additive error of the approximation is $1/n^c$ for a constant $c$ independent of $n$; i.e., $|\lambda-\lambda'|\le 1/n^c$, for $\lambda'$ the biggest eigenvalue of $M(K)$; see \cite{MT10}.
%
%
\begin{definition}[FKO witness]\label{def:witness-with-3-items}
Given a 3CNF $K$, the \emph{FKO witness for the unsatisfiability of $K$} is defined to be the following collection:
\begin{enumerate}
\item the imbalance $I(K)$;
\item the matrix $M(K)$ and the (polynomially small) rational approximation \(\lambda\) of its largest eigenvalue;
\item \label{it:part-to-replace} a collection $S$ consisting of $t<n^2$ inconsistent even $k$-tuples such that every clause in $K$ appears in at most $d$ many even $k$-tuples, for some positive natural $k$;
\item \label{it:ineq-in-wit} the inequality  $t>\frac{d\cdot(I(K) +\lambda n)}{2} + o(1)$  holds.
\end{enumerate}
(The $ o(1) $ above stands for a specific rational number $b/n^c $, for $ c $ and $b$ constants independent of $n$).
\end{definition}

Feige et al.~\cite{FKO06} showed that if a 3CNF has a witness as above it is unsatisfiable. We have the following:
\begin{theorem}[\cite{FKO06}]\label{thm:FKO}
There are constants $c_0,c_{1}$ such that for a random 3CNF $K$ with $n$ variables and  $\Omega(n^{1.4})$ clauses, with probability converging to $1$ as $n$ tends to infinity there exist natural numbers
 $k,t,d$ such that $t=\Omega(n^{1.4})$ and
\vspace{-7pt} 
\begin{equation}\label{eq:fko-param-bounds}
~~~~~~~~~~~~~~~~~~~~k\le c_0\cd n^{0.2} \hbox{ ~and }  ~~~t<n^2 \hbox{ ~~and } ~~d\le c_1\cd n^{0.2}, 
\end{equation}

\vspace{-8pt} 
\ind and \(K\) has a witness for unsatisfiability as 
in Definition \ref{def:witness-with-3-items}.
\end{theorem}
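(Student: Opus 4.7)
The plan is to establish the four items of the witness separately, using standard concentration estimates for the scalar quantities and a combinatorial/spectral argument for the hard ingredient, the inconsistent even $k$-tuples.

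First, for a random 3CNF $K$ with $m=\Omega(n^{1.4})$ clauses, I would bound the imbalance $I(K)$. Each variable $x_i$ appears in expectation $3m/n$ times, and conditional on its degree each occurrence is positive or negative with probability $1/2$ independently, so by Hoeffding the imbalance of $x_i$ is $O(\sqrt{m/n})$ whp; summing over the $n$ variables gives $I(K)=O(\sqrt{mn})=O(n^{1.2})$. Next, for the matrix $M(K)$: each clause contributes a small bounded rank-$O(1)$ random perturbation with mean zero entries (because the two relative signs are unbiased), so standard spectral-norm bounds for sums of independent random matrices (or a trace/moment computation as in \cite{FO07}) give $\lambda=O(\sqrt{m/n})=O(n^{0.2})$ whp. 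Both $I(K)$ and $M(K)$ are polynomial-time computable, and the rational approximation $\lambda$ with additive error $n^{-c}$ is obtained by power iteration.

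The main step is constructing the family $S$ of $t=\Omega(n^{1.4})$ inconsistent even $k$-tuples with $k\le c_0 n^{0.2}$ and clause-multiplicity at most $d\le c_1 n^{0.2}$. My approach is to pass to the variable (multi)hypergraph whose hyperedges are the three underlying variables of each clause, and search for closed configurations in which every variable appears an even number of times: such objects correspond exactly to even $k$-tuples. For density $n^{0.4}$ a second-moment / trace computation on a clause-by-clause walk matrix produces $\Omega(n^{1.4})$ even $k$-tuples of length $k=O(n^{0.2})$, essentially because the expected number of length-$k$ closed even walks scales like $n(m/n)^{k/2}$, which crosses $n^{1.4}$ already at $k\asymp n^{0.2}$. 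To meet the multiplicity bound $d$ I would apply a greedy deletion / regularisation step: iteratively discard any even $k$-tuple that uses a currently over-represented clause; a counting argument shows only a constant fraction of the family is lost. Finally, to upgrade from \emph{even} to \emph{inconsistent} $k$-tuples, I observe that the sign pattern of the clauses in $K$ is independent of the underlying variable pattern, so conditional on the unsigned tuple the parity of its total negative literals is an unbiased coin; Hoeffding (or a Lov\'asz Local Lemma argument) guarantees that a $1/2-o(1)$ fraction of the surviving tuples are inconsistent.

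The step I expect to be the main obstacle is this last combinatorial construction: one has to simultaneously produce \emph{many} even $k$-tuples, \emph{short} ones (length $O(n^{0.2})$), with \emph{bounded} clause reuse ($d=O(n^{0.2})$), and whose \emph{signs} make a large fraction inconsistent, all within the tight budget forced by density $n^{0.4}$. Once this is in hand, the required inequality from item \ref{it:ineq-in-wit} is immediate: with $I(K)=O(n^{1.2})$, $\lambda n=O(n^{1.2})$ and $d=O(n^{0.2})$ the right-hand side $\tfrac{d(I(K)+\lambda n)}{2}$ is $O(n^{1.4})$, which is beaten by $t=\Omega(n^{1.4})$ after choosing the hidden constants appropriately. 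A union bound over the three high-probability events (bounded imbalance, bounded spectrum, existence of $S$) gives the witness of Definition \ref{def:witness-with-3-items} for $K$ whp, proving the theorem.
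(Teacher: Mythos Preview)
The paper does not prove this theorem; it is quoted verbatim as a result of Feige, Kim and Ofek \cite{FKO06} and used as a black box throughout (the paper's own contribution begins only after this point, with the disjoint \NP-pair formulation and the \RZQ\ refutations). So there is no ``paper's own proof'' to compare your proposal against.

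As to your sketch itself: the overall architecture---bound $I(K)$ by $O(\sqrt{mn})$, bound $\lambda$ via random-matrix methods, then construct the family $S$ and verify item~\ref{it:ineq-in-wit}---is indeed the shape of the argument in \cite{FKO06}. Two places deserve caution. First, your heuristic ``the expected number of length-$k$ closed even walks scales like $n(m/n)^{k/2}$, which crosses $n^{1.4}$ at $k\asymp n^{0.2}$'' is too loose: with $m/n=n^{0.4}$ and $k=\Theta(n^{0.2})$ that expression is $n^{\Theta(n^{0.2})}$, not polynomial, so it does not by itself explain why one gets $t=\Theta(n^{1.4})$ tuples with the right multiplicity control; the actual construction in \cite{FKO06} is considerably more delicate and does not proceed by a naive first-moment count of closed walks. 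Second, your final step ``$t=\Omega(n^{1.4})$ beats $\tfrac{d(I(K)+\lambda n)}{2}=O(n^{1.4})$ after choosing the hidden constants appropriately'' is exactly where the whole difficulty of \cite{FKO06} lives: both sides are $\Theta(n^{1.4})$, so establishing the strict inequality requires tracking all constants simultaneously through the spectral, imbalance and combinatorial estimates, not adjusting them post hoc. That quantitative tension is precisely why the threshold is $n^{1.4}$ rather than something smaller.
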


%
%

Inspecting the argument in \cite{FKO06}, it is not hard to see that it is sufficient to replace part \ref{it:part-to-replace} in the witness with a witness for the following:

\begin{quote}
\textit{
\ref{it:part-to-replace}'. No assignment can satisfy more than $m-\lceil t/d\rceil-1$ clauses in $K$ as 3XORs. }
\end{quote}
Therefore, since  \(I(K)\), \(M(K)\) and \(\lambda\) are all polynomial-time computable (see \cite{FKO06} for this), in order to determinize the nondeterministic refutation algorithm of \cite{FKO06} it is sufficient to provide an algorithm that almost surely determines (correctly) that part \ref{it:part-to-replace}' above holds (when also $t$ and $d$ are  such that part \ref{it:ineq-in-wit} in the witness holds). In other words, in order to construct an efficient refutation algorithm for random 3CNFs (with $\Omega(n^{1.4})$ clauses) it is sufficient to have a deterministic algorithm $\mathsf A$ that on every input 3CNF (and for \(t\) and \(d\) such that part \ref{it:ineq-in-wit} in the witness holds)  answers either ``\texttt{condition \ref{it:part-to-replace}' is correct}'' or ``\texttt{don't know}'', such that $\mathsf A$ is never wrong (i.e., if it says ``\texttt{condition \ref{it:part-to-replace}' is correct}'' then condition \ref{it:part-to-replace}' holds) and with probability $1-o(n)$ over the input 3CNFs   $\mathsf A$ answers ``\texttt{condition \ref{it:part-to-replace}' is correct}''. Note that we do \underline{not} need to actually find the Feige et al.~witness \uline{nor} do we need to decide if it exists or not (it is possible that condition
\ref{it:part-to-replace}' holds but condition \ref{it:part-to-replace} does not, meaning that there is \emph{no} Feige et al.~witness).
%
%
The relation between unsatisfiability and bounding the number of clauses that can be satisfied as 3XOR in a 3CNF was introduced  by Feige in \cite{Fei02} (and used in \cite{FO07} as well as in \cite{FKO06}).


\subsection{The disjoint \NP-pair corresponding to the 3XOR principle}\label{sec:disj-NP-pair-definition}

We define the corresponding \emph{3XOR principle disjoint \NP-pair} as the pair of languages $(L,N)$, where $k,t,d$ are natural numbers given in \textit{unary}:

\vspace{-16pt}

\begin{multline*}
L:=\{\langle X,k,t,d \rangle \,\big|\; X \hbox{ is a 3CNF with $n$ variables and 
Equation (\ref{eq:fko-param-bounds}) holds for $k,t,d$}\\ ~~~~~~~~~~~~~~~~~~~~~~~~~~~\mbox{and there exists $t$ inconsistent
                 even $k$-tuples such that  } \\ \mbox{each clause of $X$ appears in no more than $d$ many $k$-tuples}\},
\end{multline*}

\vspace{-25pt}

\begin{multline*}
N:=\big \{\langle X,k,t,d \rangle \;\big |\; X \hbox{ is a 3CNF with $n$ variables and $m$ clauses and Equation (\ref{eq:fko-param-bounds})} \\ \mbox{~~~~~~~~~~~~~~~~~~~~~~~~~~holds for $k,t,d $ and there exists an assignment that }\\ \mbox{ satisfies at least  $m-\lceil t/d\rceil$ clauses in $X$ as 3XOR}\big\}.
\end{multline*}

It is easy to verify that both $ L $ and $N$ are indeed \NP\ sets, %
%
and that by the 3XOR principle, \(L\cap N=\emptyset\).

Using the same notation as in Section \ref{sec:disj-NP-pair-recall}, we denote by \(R_{}(x,y)\) and \(Q_{}(x,z)\) the polynomial-time relations for $L_{}$ and $N_{}$, respectively.
Further, for every $n\in\mathbb{N}$, there exists an \emph{unsatisfiable}  CNF formula in three  mutually disjoint sets of variables $\overline x, \overline y, \overline  z$:
\begin{equation}\label{eq:A_and_B}
                                \Upsilon_n:=A_{n}(\overline x ,\overline y)\land B_{n}(\overline x,\overline z),
\end{equation}
where \(A_{n}(\overline x,\overline y)\) and $B_{n}(\overline x,\overline z)$ are the CNF formulas expressing that $R_{}(x,y)$ and $Q_{}(x,z)$ are true for  $ x $ of length $n$, respectively.

\setcounter{tmpc}{\thetheorem}
\setcounter{theorem}{\thecA}
\begin{theorem}\label{cor:main}
Assume that there exists a propositional proof system
that has interpolation in time \(T(n)\) and that admits size $s(n)$
refutations of $ \Upsilon_n $.
Then, there is a \textit{deterministic} refutation algorithm for random 3CNF formulas with $\Omega(n^{1.4})$ clauses running in time \(T(s(n))\).
\end{theorem}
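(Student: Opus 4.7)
The plan is to show that, once the hypothesized $\mathcal{P}$-refutation of $\Upsilon_n$ is in hand, an interpolation call replaces the only nondeterministic step in the Feige--Kim--Ofek procedure, namely guessing the collection of inconsistent even $k$-tuples. Concretely, by Definition~\ref{def:interpol} applied to the $s(n)$-size refutation of the formula $\Upsilon_n = A_n(\overline x,\overline y)\land B_n(\overline x,\overline z)$ set up in (\ref{eq:A_and_B}), I obtain a deterministic algorithm $\mathcal I$ running in time $T(s(n))$ which, on any assignment $\overline\alpha$ to the shared variables $\overline x$, outputs a bit $f(\overline\alpha)$ separating the pair $(L,N)$ in the sense of (\ref{eq:interp-definition}): $f(\overline\alpha)=1$ forces $\langle K,k,t,d\rangle\notin L$, and $f(\overline\alpha)=0$ forces $\langle K,k,t,d\rangle\notin N$. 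Here $\overline x$ encodes a tuple $\langle K,k,t,d\rangle$ via the fixed polynomial-size CNF encoding used in Section~\ref{sec:disj-NP-pair-definition}.

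On input a 3CNF $K$ with $n$ variables and $m=\Omega(n^{1.4})$ clauses, my refutation algorithm first computes in polynomial time the imbalance $I(K)$, the matrix $M(K)$, and a $1/n^c$-approximation $\lambda$ to its largest eigenvalue, exactly as allowed by Section~\ref{sec:describe-the-FKO-witness}. It then enumerates all triples $(k,t,d)$ of naturals satisfying both (\ref{eq:fko-param-bounds}) and the inequality of item~\ref{it:ineq-in-wit} of Definition~\ref{def:witness-with-3-items}; because $k,d=O(n^{0.2})$ and $t<n^2$, and because $k,t,d$ are part of the input to $L,N$ in unary, this enumeration has polynomially many triples. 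For each such triple the algorithm forms the assignment $\overline\alpha$ encoding $\langle K,k,t,d\rangle$, calls $\mathcal I$ on $\overline\alpha$, and outputs \texttt{unsatisfiable} as soon as $\mathcal I$ returns $0$; if no triple yields $0$, it outputs \texttt{don't know}. The total running time is polynomially many invocations of $\mathcal I$, hence $T(s(n))$ up to polynomial overhead.

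For soundness, suppose the algorithm outputs \texttt{unsatisfiable} via some triple $(k,t,d)$ satisfying the parameter bounds and the eigenvalue inequality. Then $f(\overline\alpha)=0$, so by (\ref{eq:interp-definition}) the formula $B_n(\overline\alpha,\overline z)$ is unsatisfiable, which is precisely the statement that no assignment satisfies more than $m-\lceil t/d\rceil-1$ of the clauses of $K$ as 3XORs, i.e.\ condition~\ref{it:part-to-replace}' of Section~\ref{sec:describe-the-FKO-witness}; together with the polynomial-time verified quantities $I(K),M(K),\lambda$ this is exactly the sufficient condition isolated there for $K$ to be unsatisfiable. For completeness, Theorem~\ref{thm:FKO} asserts that with probability $1-o(1)$ over random 3CNFs with $\Omega(n^{1.4})$ clauses there exist $k,t,d$ for which the full FKO witness exists, in particular $\langle K,k,t,d\rangle\in L$. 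Since $L\cap N=\emptyset$ (this is precisely the 3XOR principle, Proposition~\ref{prop:3xor-princ}), for such an $\overline\alpha$ the formula $A_n(\overline\alpha,\overline y)$ is satisfiable and $B_n(\overline\alpha,\overline z)$ is unsatisfiable, forcing $f(\overline\alpha)=0$ by (\ref{eq:interp-definition}); the algorithm therefore outputs \texttt{unsatisfiable} on this triple.

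The conceptual core is already packaged into Section~\ref{sec:disj-NP-pair-definition} and Section~\ref{sec:describe-the-FKO-witness}, so the main obstacle is bookkeeping rather than idea. In particular I have to be careful that (a) the CNF encodings $A_n$ and $B_n$ of $R(x,y)$ and $Q(x,z)$ are genuinely polynomial-size in $n$ and that the bits of $\langle K,k,t,d\rangle$ really play the role of the shared variables $\overline x$, so that interpolation applies literally; (b) the fact that the FKO witness requires item~\ref{it:part-to-replace} while $N$ is defined via item~\ref{it:part-to-replace}' does not break completeness---this is the observation already made after Theorem~\ref{thm:FKO} that item~\ref{it:part-to-replace} implies item~\ref{it:part-to-replace}' by Proposition~\ref{prop:3xor-princ}, so that witnesses of the former kind (which exist with high probability) certify membership in $L\setminus N$; and (c) the polynomially many interpolation queries do not inflate the running time beyond $T(s(n))$, which is standard provided $T$ grows at least polynomially.
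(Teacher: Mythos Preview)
Your proposal is correct and follows essentially the same approach as the paper's proof: obtain an interpolant for the pair $(L,N)$ from the assumed refutation of $\Upsilon_n$, enumerate the polynomially many triples $(k,t,d)$ in the allowed range, and output \texttt{unsatisfiable} when the interpolant returns $0$ on a triple for which the eigenvalue inequality of item~\ref{it:ineq-in-wit} holds. The only cosmetic difference is that you filter triples by item~\ref{it:ineq-in-wit} \emph{before} calling the interpolant, whereas the paper calls the interpolant first and checks item~\ref{it:ineq-in-wit} afterwards; both orderings are clearly equivalent.
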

\setcounter{theorem}{\thetmpc}
\begin{remark}
Specifically, if the propositional proof system has feasible interpolation and admits polynomial-size refutations of $\Upsilon_n$ we obtain a polynomial-time refutation algorithm.
\end{remark}
\begin{proof}
By the assumption, and by the definition of the feasible interpolation property,  there exists a deterministic polynomial-time interpolant algorithm $\mathsf A_{}$ that on input a 3CNF \(K\) and three natural  numbers $k,t,d$ given in unary, if $\mathsf A(K,k,t,d)=1$ then $\langle K,k,t,d \rangle \not\in L_{}$ and if $\mathsf A_{}(K,k,t,d)=0$ then $\langle K,k,t,d \rangle \not\in N$.

The desired refutation algorithm  works as follows: it receives the 3CNF \(K\) and for  each 3-tuple of natural numbers $\langle k,t,d \rangle $ for which Equation (\ref{eq:fko-param-bounds}) holds it runs $\mathsf A(K,k,t,d)$. Note there are only \(O(n^{3})\) such 3-tuples.
If for one of these runs  $\mathsf A(K,k,t,d)=0$  then we know that $\langle K,k,t,d \rangle \not\in N$; in this case we check (in polynomial-time) that the inequality in Part \ref{it:ineq-in-wit} of the  FKO witness (Definition \ref{def:witness-with-3-items}) holds, and if it does, we answer ``\texttt{unsatisfiable}". Otherwise, we answer ``\texttt{don't know}".

The correctness of this algorithm stems from the following two points:

\ind\textbf{(i)} If we answered ``\texttt{unsatisfiable}", then there exist $k,t,d$ such that $\langle K,k,t,d \rangle \not\in N_{}$ and Part \ref{it:ineq-in-wit} in the FKO witness holds, and so Condition \ref{it:part-to-replace}' (from Section
\ref{sec:describe-the-FKO-witness}) is correct, and hence, by the discussion in \ref{sec:describe-the-FKO-witness},
$K$ is unsatisfiable.\smallskip

\ind\textbf{(ii)} For almost all 3CNFs we will answer ``\texttt{unsatisfiable}". This is because almost all of them will have an FKO witness (by Theorem \ref{thm:FKO}), which means that $\langle K,k,t,d \rangle\in L$ for some choice of  \(t<n^2,d,k\) (in the prescribed ranges) and hence the interpolant algorithm $\mathsf A$ must output 0 in at least one of these cases (because $\mathsf A(K,k,t,d)=1$ means that $\langle K,k,t,d \rangle\not\in L$).\end{proof}



\section{Short refutations of the 3XOR principle }
\label{sec:encoding}

In this section we define the propositional refutation system in which we  demonstrate polynomial-size refutations of the 3XOR principle. We then give an explicit encoding of the 3XOR principle as an unsatisfiable set of
disjunctions of  linear equations. 

\subsection{The propositional refutation systems \RZ\ and \RZQ}\label{sec:define-proof-systems}

The refutation system in which we shall prove the unsatisfiability of the 3XOR principle is denoted \RZQ. It is an extension of the refutation system \RZ\ introduced in \cite{RT07}.
The system \RL\ operates with disjunctions of linear equations with integer coefficients and \RQ\ operates with disjunctions of quadratic equations with integer coefficients, where in both cases the coefficients are written in unary representation.  We also add axioms that force all variables to be $0,1$. First we define the refutation system \RL. 

 The \textbf{size} of a linear equation $a_1 x_1+\ldots+ a_n x_n +a_{n+1}= a_0$
is defined to be $\sum_{i=0}^{n+1}{|a_i|}$, that is, the sum of the  sizes of all $a_i$ written in
\textit{unary} notation. The \emph{size of a disjunction of linear equations} is the total size of all linear equations in it. The \textbf{size }of a \textit{quadratic equation }and of a disjunction of quadratic equations is defined in a similar manner (now counting the size of  the constant coefficients, the coefficients of the linear terms and the coefficients of the quadratic terms). The \emph{empty disjunction} is unsatisfiable and stands for the truth value \textsf{false}.
\begin{notation} For $L$ a linear or quadratic sum and $S\subseteq \mathbb{Z}$, we write  $L\in S$, to denote the disjunction $\bigvee_{s\in S} L=s $. We call $L\in S$ a \emph{generalized linear (or quadratic) equation}.
\end{notation}

\begin{definition}[\RL]\label{def-R(lin)}
Let $K:= \{K_1,\ldots,K_m\}$ be a collection of disjunctions of linear
equations in the variables $ x_1,\ldots,x_n$. An \emph{R(lin)-proof from $K$ of a disjunction of linear equations
$D$} is a finite sequence $\pi =(D_1 ,\ldots,D_\ell)$ of disjunctions of linear
equations, such that $D_\ell=D$ and for every $i\in[\ell]$ one of the following holds:
\begin{enumerate}
\item  $\,D_i = K_j$, for some $j\in[m]$;

\item $D_i$ is a
\textbf{Boolean axiom} $\,x_t\in\{0,1\}$, for some $t\in[n]$;

\item $D_i$ was deduced by one of the following \RL-inference rules,
 using $D_j,D_k$ for some $j,k<i$:
\begin{description}

\item[\quad Resolution] Let $A,B$ be two, possibly empty, disjunctions of linear equations and let $L_1,L_2$ be two linear equations. From $A\Or L_1$ and $B\Or L_2$ derive $A \Or B \Or(L_1-L_2)$. (We assume that every linear form with $n$ variables is written as a sum of at most $n+1$ monomials.\footnote{Accordingly, in \RZQ\ we assume that every  quadratic sum with $n$ variables is written as a sum of at most $1+2n+ {n \choose 2}$ monomials.})

\item[\quad Weakening] From a possibly empty disjunction of linear equations $A$ derive
$A \Or L$\,, where $L$ is an arbitrary linear equation over the variables $x_1,\ldots,x_n$.

\item[\quad Simplification] From $A\Or (0=k)$ derive
$A$, where $A$ is a possibly empty disjunction of linear equations and $k\ne 0$.
\end{description}
\end{enumerate}
An \emph{R(lin) refutation of} a collection of disjunctions of linear equations
$K$ is a proof of the empty disjunction from $K$. The \textbf{size} of an R(lin) proof
$\pi$ is the total size of all the disjunctions of linear equations in
$\pi$ (where coefficients are written in unary representation).
\end{definition}

\begin{definition}[\RQ]\label{def-R(quad)}
The system \RQ\ is similar to \RL\ except that  proof-lines can be disjunctions of  quadratic equations with integer coefficients $\sum_{i,j}c_{ij}x_i x_j + \sum_i c_ix_i + c= S$ instead of  linear equations; and the \textbf{Boolean axioms} are now defined for all $i,j\in[n]$, as follows:
\[
                x_i\in\{0,1\}, ~~~~~~~~
                x_i+x_j-x_ix_j\in\{0,1\}, ~~~~~~~~~x_i-x_ix_j\in\{0,1\}\,.
\]
The size of an \RZQ\ refutation is the total size of all the proof-lines in it.
\end{definition}

Both \RZ\ and \RZQ\ can be proved to be sound and complete (for their respective languages, namely, disjunctions of linear and quadratic equations, respectively) refutation systems.

\smallskip

\subsection{Comparison of the refutation system \RZQ\ with other systems}\label{sec:choich_of_RQuad}

The \RZQ\ refutation  system is a weak propositional proof system that, loosely speaking, can both \emph{count} and \emph{compose mappings}, as we explain below.
 
Recall that the cutting planes refutation system with small coefficients operates with linear integer inequalities of the form $\sum_i a_i x_i \ge C$ (where the $a_i$'s are polynomial in the size of the formula to be refuted) that can be added, multiplied by a positive integer, simplified and divided by an integer $c$ in case $c$ divides every integer $a_i$, in which the division of the right hand side $C/c$ is rounded up (i.e., we obtain $\sum_i \frac{a_i}{c} x_i\ge \lceil \frac{C}{c} \rceil$). 

The cutting planes with small coefficients system can ``count" to a certain extent, namely it can prove efficiently certain unsatisfiable instances encoding counting arguments (like the pigeonhole principle). However, other simple counting arguments like the Tseitin graph formulas \cite{Tse68} are not known to have polynomial-size cutting planes refutations.

A weak extension of cutting planes with small coefficients is defined so to allow \textit{disjunctions }of linear equations (a big disjunction of linear equations can represent a single inequality). This way we obtain the system \RZ, that is similar to \RZQ\ but with \textit{linear} instead of quadratic equations. It was shown in \cite{RT07} that even when we allow disjunctions of only a \emph{constant number} of generalized\footnote{A 
\emph{generalized equation} is an equation $L\in S$, for 
$S\subset\mathbb{Z}$; which stands for the disjunction 
$\bigvee_{s\in S}L=s$.} linear equations in each proof-line,  \RZ\ has short refutations of the Tseitin formulas; this shows that using (fairly  restricted) disjunctions of linear equations allows to improve the ability of  cutting planes with small coefficients to refute contradictions that involve counting.

However, for our refutation of the 3XOR principle to work out we need to use quadratic instead of linear equations. Informally,  the reason for this is  to be able to ``compose maps": as observed by Pudl\'ak \cite{Pud03-tcs}, the reason why the \emph{$k$-Clique and $(k-1)$-Coloring} contradictions provably do not have short cutting planes refutations is that cutting planes cannot compose two mappings, which then makes it impossible to perform a routine reduction from the   \emph{$k$-Clique and $(k-1)$-Coloring} contradiction to the pigeonhole principle contradiction (and the latter contradiction does admit short cutting planes refutations). This is why Pudl\'ak introduced in \cite{Pud03-tcs} the system $CP^2$ which is cutting planes operating with \emph{quadratic inequalities}. The system \RZQ\ we work with is an extension of $CP^2$ (when the latter is restricted to small coefficients).



\subsection{The 3XOR principle formula}
We now describe the formula $\Upsilon_{n}$ encoding the 3XOR principle (the formula depends also on the parameters  $t,m $ and $k$, but we will suppress these subscripts). 

Recall that we wish to construct a family of formulas in three  mutually disjoint sets of variables $\overline X, \overline Y, \overline  Z$:
\begin{equation}\label{eq:Revisit-A_and_B}
                                \Upsilon_n:=A_{n}(\overline X ,\overline Y)\land B_{n}(\overline X,\overline Z),
\end{equation}
(where, in the terminology of Section \ref{sec:disj-NP-pair-definition}, \(A_{n}(\overline X,\overline Y)\) and $B_{n}(\overline X,\overline Z)$ are the CNF formulas expressing that $R_{}(x,y)$ and $Q_{}(x,z)$ are true for  $ x $ of length $n$, respectively).

%
%

Apart from the variables $\overline X,\overline Y, \overline Z$ we also add a group of variables, serving as extension 
variables: variables that encode the  product of two other variables, namely, (extension) variables that are forced to behave like products of two variables from $\overline X,\overline Y, \overline  Z$.
\textbf{\textit{We denote such extension variables with the $\llbracket \cd \rrbracket $ symbol; }}e.g., $\llbracket x_i\cd y_j \rrbracket $. 

Since we cannot use the $\overline Y$ variables in the second part of formula \ref{eq:Revisit-A_and_B} and we cannot use the $\overline Z$ variables in the first part of the formula \ref{eq:Revisit-A_and_B}, we can encode only products of variables from $\overline X,\overline Y$ and from $\overline X,\overline Z$, but \textit{not }products of a $\overline Y$ variable with a $\overline Z$ variable.
\smallskip 

It will be convenient sometimes to denote by  $x_{i+n}$  the literal $\neg x_i$, when it is assumed we use the $n$ variables $x_1,\ldots,x_n$ in the  3CNF encoded by $\overline X$.
\smallskip

\ind\textbf{A technical remark}: For the sake of simplicity  we \emph{do not} encode the three unary parameters $k,t,d$ (appearing in the disjoint \NP-pair in
Sec.~\ref{sec:disj-NP-pair-definition}) in our formula for $\Upsilon_n$ (and accordingly we do not encode the constraints in Equation (\ref{eq:fko-param-bounds})). This  slightly simplifies things, and does not harm the validity of the results, as it is easy to add these constraints to the formula and give short \RZQ\ refutations for such a  formulation.

\para{Variables and their meaning.}
The variables $\overline X$ correspond to the input 3CNF with $n$ variables. The variables $\overline Y$ correspond to the collections of $t$ many inconsistent even $k$-tuples. The $\overline Z=\{z_1,\ldots,z_n\}$ variables stand for a Boolean assignment
for the $n$ variables of the 3CNF.
(Note that we use the variables $x_i$ for the variables in the 3CNF and the variables $x_{ij}$ for the variables in our encoding of the 3CNF.)

The input 3CNF  $\overline X$ is encoded as a \(3m\times 2n\) table $\overline X$, where each block of three rows corresponds to a clause, and columns from 1 to $n$ correspond to positive literals occurrences, and columns $n+1$ to $2n$ correspond to negative literals occurrences. Formally, let $1\le i=3\cd l + r \le 3m$, where $r\in\{0,1,2\}, l\in[n],$ and  let $j\in[2n]$. Then $x_{ij}=1$ means that the $r$th literal in the $l$th clause in the input 3CNF is:

\begin{center} $x_j$ if $j\le n$, and  $\neg x_{j-n}$, if $j>n$.
\end{center}


The collection of $t$ inconsistent $k$ even tuples is encoded as $t$ tables, each table is encoded by the variables $\overline Y^{(s)}$, for $s\in[t]$. Each $\overline Y^{(s)}$ represents a table of dimension $k\times m$, where $y^{(s)}_{jl}=1 $ iff the $j$th member in the $s$th $k$-tuple is the $l$th clause (meaning the $l$th clause in the input 3CNF encoded by $\overline X$).

\paragraph{Group I of formulas (containing only $\overline X,\overline Y$):}

\begin{enumerate}
\item Every row in $\overline X$ contains exactly one $1$:
$$\sum_{j=1}^{2n} x_{ij} = 1\hbox{, ~~~for every }i\in[3m].$$



\item Every row in $\overline Y^{(s)}$ contains exactly one $1$: $$\sum_{j=1}^m y^{(s)}_{ij} =1\hbox{,~~~ for all }s\in[t], i\in[k].$$

\item Every column in $\overline Y^{(s)}$ contains at most one $1$:
$$\sum_{i=1}^k y^{(s)}_{ij} \in\{0,1\}\hbox{, ~~~for all }s\in[t], j\in[m].$$


\item \label{it:prod-variable} For any $ j\in[k],r\in[m],s\in[t],\ell\in[3m], i\in[2n]$, we introduce the new \textbf{\uline{\textbf{single} formal variable}}  $\llbracket y^{(s)}_{jr}\cd x_{\ell i}\rrbracket $ which will stand for the \emph{product} of two other formal variables
$y^{(s)}_{jr}\cd x_{\ell i}\,.$
For this we shall have the following axioms:
\[
          y^{(s)}_{jr} -\llbracket y^{(s)}_{jr}\cd x_{\ell i}\rrbracket \in \{0,1\}
\hbox{~~~and~~~}      x_{\ell} -\llbracket y^{(s)}_{jr}\cd x_{\ell i}\rrbracket \in \{0,1\}
\]  and
\[
                 y^{(s)}_{jr} + x_{\ell i} -\llbracket y^{(s)}_{jr}\cd x_{\ell i}\rrbracket \in\{0,1\}
                 \]

\setcounter{tmpc}{\theenumi}
\end{enumerate}

\ind
 As an abbreviation (\emph{not} a formal variable) we define the following:
\[
                Q_{ijh}^{(s)}:=\sum_{r=1}^m \llbracket y^{(s)}_{jr}\cd x_{(3(r-1)+h)i}\rrbracket\,, \hbox{~~~~~for all~} i=[2n] \hbox{~and~} h\in\{0,1,2\} \hbox{~and~} s\in[t],
\]
which expresses that $x_i$ occurs as the $h$th literal in the  $j$th clause of $\overline Y^{(s)}$.
\begin{enumerate}
\setcounter{enumi}{\thetmpc}


\item \label{it:007} We express that all the $\overline Y^{(s)}$'s are \emph{even} $k$-tuples (that is, that every variable $x_i$ appears even times) by:
\[
                                \sum_{r\in[k], h=0,1,2} Q_{irh}^{(s)}+Q^{(s)}_{(i+n)rh} \in \{0,2,4,\ldots,k\},~~~~~~~~\hbox{for all $i\in[n], s\in[t]$}
.\]
We can assume that $k$ is even, since for every even $k$-tuple $k$ must be even.

\item \label{it:encode-inconsistent} Similarly, we encode that the $\overline Y^{(s)}$'s are \emph{inconsistent} (that is, the number of negative literals in them is odd) by:
\[
  \sum_{r\in[k],h=0,1,2 \atop i\in[n]} Q^{(s)}_{(i+n)rh} \in \{0,3,5,\ldots,k-1\}.
\]
\item \label{it:only-d-joins} Every clause $i\in[m]$ appears in at most $d$ even $k$-tuples $\overline Y^{(1)},\ldots,\overline Y^{(t)}$. We put:
\[
                \sum_{j\in[k],  s\in[t]} y^{(s)}_{ji} \in \{0,1,\ldots,d\}, \hbox{~~~~~for every $i\in[m]$}.
\]

\setcounter{tmpc}{\theenumi}
\end{enumerate}
\medskip

\ind This finishes the encoding of the $t$ inconsistent even $k$-tuples.


\para{Group II of formulas (containing only $\overline X,\overline Z$):}

We now turn to the formulas expressing that there are assignments $\overline Z$ that satisfy more than  $m-\lceil t/d \rceil$ clauses in $\overline X$ as 3XORs. For every $j\in[3m],i\in[2n],\ell\in[n]$, let $\llbracket x_{ji}\cd z_\ell \rrbracket $ be a new formal variable that stands for the product $x_{ji}\cd z_\ell$. As in part \ref{it:prod-variable} of the formula above, we include the axioms that force $\llbracket x_{ji}\cd z_\ell \rrbracket $  to stand for $x_{ji}\cd z_\ell$.

Let us use the following abbreviation:
\[
U_j:=\sum_{h=0,1,2}\left(\sum_{i=1}^{n} \left\llbracket x_{(3(j-1)+h)i}\cd z_i \right\rrbracket  + \sum_{i=1}^{n} \left(x_{(3(j-1)+h)(i+n)}- \left\llbracket x_{(3(j-1)+h)(i+n)}\cd z_i \right\rrbracket\right)\right).
\]

Then, $U_j \in \{1,3\}$ states that the $j$th clause in $\overline X$ is satisfied as 3XOR by $\overline Z$. Note that  $x_{(3(j-1)+h)(i+n)}- \left\llbracket x_{(3(j-1)+h)(i+n)}\cd z_i \right\rrbracket$ is a linear term that expresses the quadratic term $x_{(3(j-1)+h)(i+n)}\cd(1-z_i)$.

\begin{enumerate}
\setcounter{enumi}{\thetmpc}
\item Let $u_j$ be a new formal variable expressing that the $j$th clause in $\overline X$ is satisfied as 3XOR by $\overline Z$. Hence,  $U_j \in\{1,3\}$ iff $ u_j=1 $, and we encode it as:
\[
 U_j \in\{0,2\} \lor (u_j=1) \hbox{~~and~~} U_j\in\{1,3\}\lor (u_j=0),
\]
\item \label{it:u-sum-i} There are assignments $\overline Z$ that satisfy more than  $m-\lceil t/d \rceil$ clauses in $\overline X$ as 3XORs:
\[
                \sum_{j=1}^m u_j\in\{m-\lceil t/d\rceil+1,\ldots,m\}.
\]
\end{enumerate}

The set of formulas described in this section has no $0,1$ solution by virtue of the 3XOR principle itself (Section \ref{sec:describe-the-FKO-witness}).



\section{Short refutations for the 3XOR principle}\label{sec:the-proof}
In this section we demonstrate polynomial-size (in $n$) \RZQ\ refutations of the 3XOR principle as encoded by disjunctions of linear equations in the previous section. 

\setcounter{tmpc}{\thetheorem}
\setcounter{theorem}{\thecB}
\begin{theorem}\label{thm:RZQ-proofs}
\RZQ\ admits polynomial-size refutations of the 3XOR principle formulas.
\end{theorem}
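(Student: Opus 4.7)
The plan is to formalize in $\RZQ$ the mod-$2$ counting argument behind the 3XOR principle (Proposition \ref{prop:3xor-princ}), combined with a pigeonhole-style summation over the $t$ tuples. The refutation proceeds in two stages: (i) a per-tuple parity derivation showing that at least one clause in every inconsistent even $k$-tuple is 3XOR-unsatisfied by $\overline Z$; and (ii) a global linear count that reaches a contradiction with the 3XOR-satisfaction lower bound from the axioms.

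For stage (i), fix $s \in [t]$ and introduce the quadratic sum
\[
S^{(s)} := \sum_{i,j,h,r}\bigl(\llbracket y^{(s)}_{jr}\cd x_{(3(r-1)+h)i}\rrbracket\, z_i + \llbracket y^{(s)}_{jr}\cd x_{(3(r-1)+h)(i+n)}\rrbracket(1-z_i)\bigr)
\]
(with $i\in[n]$, $j\in[k]$, $h\in\{0,1,2\}$, $r\in[m]$), representing the total literal value under $\overline Z$ summed over all $3k$ positions of tuple $s$. Regrouping by variable index gives $S^{(s)} = \sum_i n_i^{(s)} + \sum_i (p_i^{(s)} - n_i^{(s)}) z_i$, where $p_i^{(s)}=\sum_{j,h}Q_{ijh}^{(s)}$ and $n_i^{(s)}=\sum_{j,h}Q_{(i+n)jh}^{(s)}$ are the positive and negative occurrence counts of $x_i$ in tuple $s$. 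The even-tuple axiom yields $p_i^{(s)} + n_i^{(s)} \in \{0,2,\ldots,k\}$, and using the identity $p_i^{(s)} - n_i^{(s)} = (p_i^{(s)}+n_i^{(s)}) - 2n_i^{(s)}$, case analysis on this disjunction derives that $(p_i^{(s)} - n_i^{(s)})z_i$ is always even. Combined with the inconsistency axiom ($\sum_i n_i^{(s)}$ odd), one concludes $S^{(s)}$ is odd. On the other hand, the $u_l$-defining axioms imply $U_l - u_l \in \{0,2\}$, hence $y^{(s)}_{jl}(U_l - u_l) \in \{0,2\}$ by case-splitting on $y^{(s)}_{jl}\in\{0,1\}$. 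Deriving the algebraic identity $\sum_{j,l} y^{(s)}_{jl} U_l = S^{(s)}$ (using the Boolean-axiom-driven identity $y \cdot \llbracket xz\rrbracket = \llbracket yx\rrbracket \cdot z$ to reassociate the triple products), one obtains
\[
\sum_{j,l} y^{(s)}_{jl}\, u_l \;\equiv\; \sum_{j,l} y^{(s)}_{jl}\, U_l \;=\; S^{(s)} \;\equiv\; 1 \pmod 2.
\]
Since the left-hand side is an integer in $[0,k]$ and $k$ is even, the stage-(i) bound $\sum_{j,l} y^{(s)}_{jl}\, u_l \le k-1$ follows.

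For stage (ii), summing the per-tuple bounds over $s\in[t]$ gives $\sum_{s,j,l} y^{(s)}_{jl}\, u_l \le t(k-1)$. The $\overline Y^{(s)}$-row axiom yields $\sum_{s,j,l} y^{(s)}_{jl} = tk$, so subtracting gives $\sum_{s,j,l} y^{(s)}_{jl}(1-u_l) \ge t$. Writing $a_l := \sum_{s,j} y^{(s)}_{jl}$ and invoking the ``at most $d$ tuples per clause'' axiom ($a_l\le d$), we obtain $d\sum_l(1-u_l) \ge \sum_l a_l(1-u_l) \ge t$, i.e.\ $\sum_l u_l \le m - \lceil t/d\rceil$, directly contradicting the 3XOR lower-bound axiom $\sum_l u_l \ge m - \lceil t/d\rceil + 1$.

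The hard part will be stage (i): the two parity derivations together with the triple-product identity relating $\sum_{j,l} y^{(s)}_{jl} U_l$ to $S^{(s)}$. Each step requires case analysis over the generalized-equation disjunctions from the axioms; the even-tuple disjunction contributes $O(k)$ cases per variable $i$, the $u_l$-defining axioms contribute constantly many per clause, and the reassociation of triple products is handled by constantly many Boolean case splits per occurrence via the extension-variable axioms. Since all parameters $n,m,k,t,d$ are polynomial in the size of $\Upsilon_n$, the aggregate derivation size remains polynomial and the resulting $\RZQ$-refutation is polynomial-size.
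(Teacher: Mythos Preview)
Your proposal is correct and follows the same two-stage architecture as the paper: a per-tuple mod-$2$ count (Proposition~\ref{prop:3xor-princ}) followed by a global summation against the $d$-to-$1$ constraint. The differences are in packaging. For stage~(i), the paper first refutes the collection $(P_{j0s}+P_{j1s}+P_{j2s})\in\{1,3\}$ and extracts the disjunction $\bigvee_j (P_{j0s}+P_{j1s}+P_{j2s})\in\{0,2\}$, then links this to the $u_i$'s via the per-clause line $(u_i y^{(s)}_{ji}=0)\lor(P_{j0s}+P_{j1s}+P_{j2s}\in\{1,3\})$; you instead work with the aggregate sum $S^{(s)}=\sum_j(P_{j0s}+P_{j1s}+P_{j2s})$, prove it odd directly from the even-tuple and inconsistency axioms, and pass to $\sum_{j,l}y^{(s)}_{jl}u_l$ through $U_l-u_l\in\{0,2\}$ and the reassociation $y\cdot\llbracket xz\rrbracket=\llbracket yx\rrbracket\cdot z$. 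This reassociation is precisely the identity the paper invokes implicitly when arguing that, under $y^{(s)}_{ji}=1$, the sum $P_{j0s}+P_{j1s}+P_{j2s}$ collapses to $U_i$; you have simply made it an explicit line-by-line step. For stage~(ii), your argument---sum the per-tuple bounds to get $\sum_{s,j,l}y^{(s)}_{jl}(1-u_l)\ge t$, then use $a_l\le d$ termwise to conclude $\sum_l(1-u_l)\ge\lceil t/d\rceil$---is more direct than the paper's route through Lemma~\ref{lem:final-rquad-proof} and the product bound of Lemma~\ref{lem:r-quad-count}, though both rely on the same ability of \RZQ\ to multiply a bounded linear sum by a $0/1$ variable via case analysis. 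Either route yields a polynomial-size \RZQ\ refutation.
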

\setcounter{theorem}{\thetmpc}

We sometimes give only a high level description of the derivations. We use the terminology and abbreviations in Section \ref{sec:encoding}. We also use freely the ability of \RZ\ (and hence \RZQ) to count. For a detailed treatment of efficient counting arguments inside \RZ\ see \cite{RT07}.

\para{Step 1:}
Working in \RZQ, we  first show that our axioms prove that $\overline Z$ cannot satisfy as 3XOR all clauses of $\overline Y^{(s)}$, for any $s\in[t]$.

Recall from Section \ref{sec:encoding} the abbreviation
\[
                Q_{ijh}^{(s)}:=\sum_{r=1}^m \llbracket y^{(s)}_{jr}\cd x_{(3(r-1)+h)i}\rrbracket\,, \hbox{~~~~~for all~} i=[2n] \hbox{~and~} h\in\{0,1,2\} \hbox{~and~} s\in[t],
\]
 which stands for the statement  that $x_i$ occurs as the $h$th literal in the  $j$th clause of $\overline Y^{(s)}$ (and where $x_i$ for $i>n$ stands for the literal  $\neg x_{i-n}$). Let us use the  abbreviation:
$$ P_{jhs}:= \sum_{i=1}^{n} Q_{ijh}^{(s)}\cd z_i + \sum_{i=1}^{n} Q_{(i+n)jh}^{(s)}\cd (1-z_i).$$
Then, $P_{jhs}$ is a quadratic sum that stands for the statement that the $h$th literal in clause $j$ in $\overline Y^{(s)}$ is true under $\overline Z$.
Thus,
\begin{equation}\label{eq:three-p}
P_{j0s}+P_{j1s}+P_{j2s}\in\{1,3\}, \hbox{~~~for all $j\in[k]$}
\end{equation}
expresses that all the clauses in $\overline Y^{(s)}$ are satisfied as 3XOR under $\overline Z$. \smallskip

\emph{Our goal now is to  refute (\ref{eq:three-p}), based on our axioms}.
Informally, this refutation is done by counting: first count  by clauses in $\overline Y^{(s)}$, namely, add all left hand sides of (\ref{eq:three-p}) together reaching an even number (in the right hand side) by virtue of $k$ being even (recall we can assume that $k$ is even). Then, count by literals, namely sum all values of literals in $Y^{(s)}$ under the assignment $\overline Z$, which we can prove is odd from our axioms.
We now describe this refutation more formally.

Since $k$ is even, counting by clauses in $\overline Y^{(s)}$, namely, adding the left hand sides of (\ref{eq:three-p}) gives us easily the following (with a polynomial-size \RZQ\ proof):
\begin{equation}\label{eq:big-sum}
           \sum_{j=1}^k P_{j0s}+P_{j1s}+P_{j2s}\in\{0,2,4,\ldots,3k\}.
\end{equation}
Now we need to count by literals in $\overline Y^{(s)}$. We can abbreviate the number of occurrences in $\overline Y^{(s)}$ of the literal $x_i$, for $i\in[2n], s\in[t],$ by:
\[
                T_i:=\sum_{j\in[k]\atop h=0,1,2} Q_{ijh}^{(s)}\,.
\]
Let us abbreviate by $S_i$ the contribution of the literals $x_i$ and $\neg x_i$ to the total sum (\ref{eq:big-sum}). Thus
\[
                S_i:=  \sum_{j\in[k]\atop h=0,1,2} Q_{ijh}^{(s)}\cd z_i  + \sum_{j\in[k]\atop h=0,1,2} Q_{(i+n)jh}^{(s)}\cd (1-z_i).
 \]

It is possible to prove the following:
\begin{equation}\label{eq:T-or_S}
                T_i\in\{0,2,4,\ldots,k\}\lor S_i\in\{1,3,5,\ldots,k-1\}
\end{equation}
which states that if the number of occurrences in $\overline Y^{(s)}$ of the literal $x_i$ is odd then (since by our axioms stating that every variable occurs even times, the number of occurrences of the literal $\neg x_i$ must also be odd) the contribution of $x_i$ and $\neg x_i$ to the total sum (\ref{eq:big-sum}) is also odd (because either $z_i=0$ or $z_i=1$).

By the axioms saying that the number of negative literals is odd (axiom \ref{it:encode-inconsistent}) we get that:
\begin{equation}\label{eq:003}
                \sum_{i=1}^n T_{i+n}\in\{1,3,5,\ldots, k\cd n-1\}.
\end{equation}
And from the axioms stating that each variable occurs even times in $\overline Y^{(s)}$ we have:
\begin{equation}\label{eq:005}
                T_i+T_{i+n}\in\{0,2,4,\ldots,k\}, \hbox{~~for all $i\in[n]$}.
\end{equation}
From (\ref{eq:005}) we obtain
$
\sum_{i=1}^{2n} T_i \in \{0,2,4,\ldots,k\cd n\}
$,
and from this and (\ref{eq:003}) we obtain
\begin{equation}\label{eq:007}
\sum_{i=1}^n T_i \in\{1,3,5,\ldots,k\cd n-1\}.
\end{equation}
Note that  (\ref{eq:T-or_S}) can be interpreted as saying that if $T_i$ is odd then so does $S_i$. Accordingly, one can use (\ref{eq:T-or_S}) to substitute all $T_1,\ldots,T_n$ in (\ref{eq:007}) with $S_1,\ldots,S_n$, respectively.
We thus get that the total sum in the left hand side of (\ref{eq:big-sum}) is in $\{1,3,5,\ldots\}$, and we obtain a contradiction
with (\ref{eq:big-sum}). 


\bigskip

From a refutation of the collection of disjunctions (\ref{eq:three-p}), for any $s\in[t]$,
we can actually get the negation of this collection, that is:
\begin{equation}\label{eq:011}
  \bigvee_{j\in[k]} (P_{j0s}+P_{j1s}+P_{j2s})\in\{0,2\}.
\end{equation}
This stems from the following: it is already true in resolution that if we have a size $\gamma$ resolution refutation of $A_1,\ldots,A_l$, then assuming the axioms $A_1\lor B_1,\ldots,A_l\lor B_l$, we can have a size $O(\gamma\cd d)$ resolution derivation of $B_1\lor\ldots\lor B_l$, given that the total size of the $B_i$'s is $d$. To see this, take the resolution refutation of $A_1,\ldots,A_l$ and OR every line in this refutation with $B_1\lor\ldots\lor B_l$ (note that the resulting new axioms are actually derivable from the axioms $A_i\lor B_i$ via Weakening). Now, to get (\ref{eq:app_011}) from (\ref{eq:app_three-p}), we do the same, putting $P_{j0s}+P_{j1s}+P_{j2s}\in\{1,3\}$ instead of $A_j$ and $P_{j0s}+P_{j1s}+P_{j2s}\in\{0,2\}$ instead of $B_j$, for all $j\in[k]$,  noting that:
\begin{equation}\label{eq:app_three-p-two}
\left(P_{j0s}+P_{j1s}+P_{j2s}\in\{1,3\}\right) \lor
                                                                (P_{j0s}+P_{j1s}+P_{j2s}\in\{0,2\}),\hbox{~~~for all $j\in[k]$}.
\end{equation}

\para{Step 2:}
The next step in our \RZQ\ refutation is  showing how to obtain the final contradiction, given the collection of formulas (\ref{eq:011}), for all $s\in[t]$. This is again by counting: we know that for every truth assignment $\overline Z$,  each $Y^{(1)},\ldots,Y^{(t)}$ must contribute at least one clause from $\overline X$ that
is unsatisfiable as 3XOR under $\overline Z$.
We can view this as a mapping $g:[t]\to [m]$ from $Y^{(1)},\ldots,Y^{(t)}$ to the $m$ clauses in $\overline X$, such that $g(i)=j$ means that $Y^{(i)}$ contributes the clause $j$ in $\overline X$ that is unsatisfiable under $\overline Z$ as 3XOR.
The mapping $g$ is not 1-to-1, but $d$-to-1, because every clause of $\overline X$ can appear at most $d$ times in $Y^{(1)},\ldots,Y^{(s)}$.
%
Our \RZQ\ refutation proceeds as follows.

By assumption we have $\sum_{i=1}^m u_i \in \{m-\lceil t/d\rceil +1,\ldots,m\}$, meaning that the number of clauses in $\overline X$ that are satisfied as 3XOR under the assignment $\overline Z$ is at least $m-\lceil t/d\rceil+1$. Also, by the axioms in our formula, for all $i\in[m]$ we can prove that $u_i=1$ implies that $U_i\in\{1,3\}$; namely that the number of true literals in the $i$th clause of $\overline X$ is $1$ or $3$.

For any $s\in[t]$, we can think of $\overline Y^{(s)}$ as a mapping $f^{(s)}:[k]\to [m]$ that maps the $k$ clauses in $\overline Y^{(s)}$ to the clauses in $\overline X$. Then, $y^{(s)}_{ij}=1$ means that $f^{(s)}(i)=j$. Thus, $u_i\cd y^{(s)}_{ji} =1 $ means that the $j$th clause in $\overline Y^{(s)} $ is the $i$th clause in $\overline X$ and that the $i$th clause in $\overline X$ is satisfiable as 3XOR under $\overline Z$.

Now, it is possible to show that for any $s\in[t]$, $i\in[m]$ and $j\in[k]$, there is a proof of the following line:
\begin{equation}\label{eq:041}
\left(u_i\cd y_{ji}^{(s)}=0\right )\lor \left(P_{j0s}+P_{j1s}+P_{j2s}\in\{1,3\}\right)
\end{equation}
which states that if the $i$th clause in $\overline X$ is satisfied as 3XOR under the assignment $\overline Z$ and the $j$th clause in $\overline Y^{(s)}$ maps to the $i$th clause in $\overline X$, then the $j$th clause in $\overline Y^{(s)} $ is satisfied as 3XOR under $\overline Z$.

Informally, the proof of (\ref{eq:041}) is explained as follows: the term $P_{j0s}+P_{j1s}+P_{j2s}$ can be seen as the addition, denoted $\mathcal S$,  of all inner products of the $j$th row of $\overline Y^{(s)}$ with the columns of $\overline X$ (for each $h=0,1,2$ we can consider the column of $\overline X$ restricted to the $h\cd i$ rows only ($i\in[m]$), and so a row of $\overline Y^{(s)}$ which is of length $m$ can have an inner product with such a column of length $m$ in $\overline X$). Because we assume that $y_{ij}^{(s)}=1$, only the $i$th coordinate in the $j$th row of $\overline Y^{(s)} $ is $1$ (and all the other entries  in this row are $0$, by our axioms). Thus, $\mathcal S$ equals in fact a single column from $\overline X$; and this single column is precisely $U_i$.

From (\ref{eq:041}) and (\ref{eq:011}) we can derive, for any  $s\in[t]$ and any $i\in[m]$:
\begin{equation}\label{eq:015}
\bigvee_{j\in[k],i\in[m]}\left(y_{ji}^{(s)}\cd (1-u_i)=1\right),
\end{equation}
stating that for some $j\in[k],i\in[m]$, the $j$th clause in $\overline Y^{(s)}$ is the $i$th clause in $\overline X$ and the $i$th clause in  $\overline X$ is not satisfied as 3XOR under $\overline Z$.

Now, from (\ref{eq:015}) and axioms \ref{it:only-d-joins} in the 3XOR principle formulas, stating that $g:[t]\to[m]$ is $d$-to-$1$, we can obtain that the number of $u_i$'s that are true is no more than $m-\lceil t/d\rceil$, that is, $\sum_{i\in[m]}u_i\in\{0,\ldots,m-\lceil t/d\rceil \}$, contradicting the axiom $\sum_{j=1}^m u_j\in\{m-\lceil t/d\rceil+1,\ldots,m\}$. The formal proofs of this in \RZQ\ is shown in the following lemma:

\begin{lemma}\label{lem:final-rquad-proof}
There are polynomial-size \RZQ\ refutations of (\ref{eq:015}) and the axioms in parts  \ref{it:only-d-joins} and \ref{it:u-sum-i} in the 3XOR principle.
\end{lemma}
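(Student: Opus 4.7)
The plan is to formalize in \RZQ\ a double-counting argument. Informally, (\ref{eq:015}) says that for each $s\in[t]$ at least one of the quantities $y_{ji}^{(s)}(1-u_i)$ equals $1$, so $\sum_{s,j,i} y_{ji}^{(s)}(1-u_i) \geq t$; axiom \ref{it:only-d-joins} says that for each $i\in[m]$ we have $\sum_{s,j}y_{ji}^{(s)} \leq d$, hence the same double sum is at most $d(m - \sum_i u_i)$; combining yields $\sum_i u_i \leq m - \lceil t/d \rceil$, contradicting axiom \ref{it:u-sum-i}. Each summand $y_{ji}^{(s)}(1-u_i) = y_{ji}^{(s)} - y_{ji}^{(s)}u_i$ is a $\{0,1\}$-valued quadratic term by the Boolean axiom $x_i - x_ix_j \in \{0,1\}$ of \RZQ, so the quadratic arithmetic in this argument is legal.

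For the lower bound I would invoke the standard counting derivations of \RZ\ from \cite{RT07}, which generalize directly to quadratic summands in \RZQ: they convert (\ref{eq:015}) together with the $\{0,1\}$-axioms into the generalized quadratic equation $\sum_{j\in[k],i\in[m]} y_{ji}^{(s)}(1-u_i) \in \{1,\ldots,km\}$, in polynomial size, for each fixed $s\in[t]$. Summing these $t$ generalized equations by iterated Resolution gives $\sum_{s\in[t],j\in[k],i\in[m]}y_{ji}^{(s)}(1-u_i)\in\{t,\ldots,tkm\}$. For the upper bound, I would case-split on axiom \ref{it:only-d-joins} (i.e.~on $\sum_{s,j}y_{ji}^{(s)}\in\{0,\ldots,d\}$) and on the Boolean axiom $u_i\in\{0,1\}$: each of the $2(d+1)$ cases gives a specific non-negative integer value for $(1-u_i)(d-\sum_{s,j}y_{ji}^{(s)})$, so combining them via Resolution produces the generalized quadratic equation $(1-u_i)\cdot(d - \sum_{s,j}y_{ji}^{(s)}) \in \{0,\ldots,d\}$. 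Summing over $i\in[m]$ and rearranging using the syntactic quadratic identity $\sum_i (1-u_i)(d - \sum_{s,j}y_{ji}^{(s)}) = dm - d\sum_i u_i - \sum_{s,j,i}y_{ji}^{(s)}(1-u_i)$ yields $\sum_{s,j,i}y_{ji}^{(s)}(1-u_i) + d\sum_i u_i \in \{0,\ldots,dm\}$.

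Combining the two bounds gives $d\sum_i u_i \leq dm - t$. A final case analysis on axiom \ref{it:u-sum-i}, taking $\sum_i u_i = v$ for each $v\in\{m-\lceil t/d\rceil+1,\ldots,m\}$, reduces each case to the concrete arithmetic inequality $dv \leq dm - t$; since $d\lceil t/d\rceil \leq t + d - 1$, each such $v$ satisfies $dv \geq dm - d\lceil t/d\rceil + d > dm - t$, producing a proof-line of the form $0=c$ for a specific nonzero integer constant $c$, which Simplification immediately refutes. The main technical obstacle is the multiplication step that turns axiom \ref{it:only-d-joins} (a generalized linear equation with $d+1$ alternatives) into the quadratic inequality $(1-u_i)\sum_{s,j}y_{ji}^{(s)} \leq d(1-u_i)$ via case analysis: this costs an $O(d)$ blow-up per index $i\in[m]$, and the local derivations must then be aggregated across $[m]\times[t]\times[k]$ without further blowup. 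Because $d,m,t,k$ are polynomial in $n$ and the counting machinery of \cite{RT07} adds only polynomial overhead, the total refutation remains polynomial, but careful bookkeeping of disjunction sizes is required throughout.
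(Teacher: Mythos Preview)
Your proposal is correct and follows the same double-counting scheme as the paper, but the upper-bound step is organized differently. The paper first sums axiom~\ref{it:only-d-joins} over all $i\in[m]$ to obtain $\sum_{s,j,r}y^{(s)}_{jr}\in\{0,\ldots,dm\}$, obtains $\sum_i(1-u_i)\in\{0,\ldots,\lceil t/d\rceil-1\}$ from axiom~\ref{it:u-sum-i}, and then invokes a standalone product lemma (Lemma~\ref{lem:r-quad-count}) that multiplies two separately bounded linear sums inside \RZQ. The resulting quantity $\big(\sum_{i}(1-u_i)\big)\cdot\big(\sum_{s,j,r}y^{(s)}_{jr}\big)$ carries \emph{independent} clause indices $i$ and $r$, so the matching lower bound is obtained by summing the consequence of~(\ref{eq:015}) over both $s\in[t]$ and an extra copy of $[m]$, yielding the threshold $tm$ rather than $t$. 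You instead keep a single clause index throughout: you multiply axiom~\ref{it:only-d-joins} by $(1-u_i)$ \emph{before} summing over $i$, via the per-clause $2(d{+}1)$-case split, and work with the diagonal quantity $\sum_{s,j,i}y^{(s)}_{ji}(1-u_i)$ for both bounds. Your route is more direct and avoids the auxiliary product lemma and the second index; the paper's route packages the ``multiply a bounded sum by a bounded sum'' step once and for all. Both collapse to the same arithmetic inequality $d(\lceil t/d\rceil-1)<t$ (scaled by $m$ in the paper's version), and the resulting refutation sizes are of the same polynomial order.
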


\begin{proof}
First sum all axioms (\ref{it:only-d-joins}) to obtain:
\begin{equation}\label{eq:031}
                \sum_{j\in[k], s\in[t]\atop r\in[m]} y^{(s)}_{jr}\in\{0,1,\ldots,d\cd m\}.
\end{equation}
From  (\ref{eq:015}) we can obtain:
\[
                \sum_{j\in[k],r\in[m]} y^{(s)}_{jr}\cd(1-u_i)\in
                                \{ 1,2,\ldots,k\cd m\}, \hbox{~~~~for every $s\in[t]$}. \]
And by summing this for all $s\in[t]$ and $i\in[m]$, we get:
\begin{align}\notag
         \sum_{i\in[m]}\sum_{j\in[k],r\in[m]\atop s\in[t]} y^{(s)}_{jr}\cd(1-u_i) & =
\sum_{i\in[m]} (1-u_i)\cd \sum_{j\in[k],r\in[m]\atop s\in[t]} y^{(s)}_{jr} \\ \label{eq:035} & \in
                                \{t\cd m, t\cd m+1,\ldots,t\cd k\cd m^2\}.
\end{align}
From the axiom in part (\ref{it:u-sum-i}) in the 3XOR principle $\sum_{j=1}^m u_j\in\{m-\lceil t/d\rceil+1,\ldots,m\}$ we can obtain easily
\[
                \sum_{i\in[m]} (1-u_i)\in\{0,1,\ldots,\lceil t/d\rceil-1\}.
\]
From this and (\ref{eq:031}) we get, via Lemma \ref{lem:r-quad-count} proved below, the following:

\[
                 \sum_{i\in[m]} (1-u_i)\cd \sum_{j\in[k], s\in[t]\atop r\in[m]} y^{(s)}_{jr}\in\left\{0,1,\ldots,d\cd m \cd (\lceil t/d\rceil-1 )\right\}.
\]
Since $d\cd m \cd (\lceil t/d\rceil-1 )< d\cd m\cd \lceil t/d\rceil\le m\cd t$, we obtain a contradiction with (\ref{eq:035}), which finishes the refutation.
\end{proof}

It remains to prove Lemma \ref{lem:r-quad-count}, which was used in the above proof:

\begin{lemma}\label{lem:r-quad-count}
Let $\sum_{i\in I} x_i\in\{0,1,\ldots,n\}$ and $\sum_{j\in J} y_j\in\{0,1,\ldots,m\}$ be disjunctions of linear equations, both of size at most $s$. Given these two disjunctions  we can  prove in \RZQ\ with a polynomial-size in $s$   proof, the following:\begin{equation}\label{eq:033}
\sum_{i\in I}x_i\cd \sum_{j\in J} y_j\in\{0,1,\ldots,m\cd n\}.
\end{equation}
\end{lemma}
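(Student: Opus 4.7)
The plan is to establish the desired generalized equation by combining a pointwise upper bound, derived from a case analysis on each $y_j$, with a parallel trivial lower bound, and then cleanly intersecting the two. Throughout, I would work inside \RZQ, freely invoking the efficient counting constructions for \RZ\ from \cite{RT07} (needed to avoid exponential blowup when summing disjunctions). The first step is, for each fixed $j\in J$, to derive the generalized equation
$$n\,y_j - \sum_{i \in I} x_i y_j \in \{0,1,\ldots,n\}$$
by a case analysis on the Boolean axiom $y_j\in\{0,1\}$: in the branch $y_j=0$, the product axioms of \RZQ\ force $x_i y_j=0$ for every $i\in I$, so the expression collapses to $0$; in the branch $y_j=1$, the product axioms force $x_i y_j=x_i$, so the expression becomes $n-\sum_i x_i \in\{0,\ldots,n\}$ by the hypothesis on $\sum_i x_i$. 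Merging the two branches yields the displayed disjunction in size polynomial in $s$.

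Next, I would sum these $|J|$ many disjunctions, each of cardinality $n+1$, using the \RZ-counting of \cite{RT07} to obtain
$$n\sum_{j\in J} y_j - \sum_{i\in I,\,j\in J} x_i y_j \in \{0,1,\ldots,n\cd|J|\}.$$
Combining this with the hypothesis $\sum_{j\in J} y_j \in \{0,1,\ldots,m\}$ via pairwise resolution yields $\sum_{i,j} x_i y_j \in T$ for some polynomially-sized set $T \subseteq \{-n\cd|J|,\ldots,nm\}$; the desired upper bound $nm$ is thus already in place, and only spurious negative disjuncts remain to be eliminated.

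To eliminate these, I would in parallel sum the Boolean axioms for $x_i y_j$ (once more via the counting of \cite{RT07}) to obtain the lower-bound disjunction $\sum_{i,j} x_i y_j \in \{0,1,\ldots,|I|\cd|J|\}$, and then intersect this with $T$ by pairwise resolution: each resolvent of the form $0=t-t'$ with $t\neq t'$ is discarded by Simplification, so the surviving disjuncts lie in $T\cap\{0,\ldots,|I|\cd|J|\}\subseteq\{0,1,\ldots,nm\}$; a final round of Weakening fills in any missing disjuncts to give $\sum_{i,j} x_i y_j \in \{0,1,\ldots,nm\}$. The main obstacle I anticipate is performing the summation of $|J|$ many disjunctions of size $n+1$ in polynomial rather than exponential size — exactly the content of the \RZ-counting machinery of \cite{RT07} — while the branch-by-branch case analysis in the first step and the final intersection are essentially routine applications of Resolution, Simplification, and Weakening.
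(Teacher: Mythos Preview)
Your proof is correct at the same level of rigor as the paper's, but it takes a genuinely different and somewhat more circuitous route. The paper's argument case-splits once on the aggregate value $a=\sum_{j\in J}y_j$ (an $(m{+}1)$-way split) and then, for each fixed $a$, does a binary case split on each $x_i$ to derive the \emph{equation} $x_i\cdot\sum_j y_j = a\,x_i$; summing these equations over $i$ is then trivially polynomial (no disjunctions to merge), and resolving with $\sum_i x_i\in\{0,\dots,n\}$ gives $\sum_i x_i\cdot\sum_j y_j\in\{0,a,2a,\dots,na\}$ directly, with no spurious values to clean up. By contrast, you case-split on each individual $y_j$, obtain a size-$(n{+}1)$ disjunction per $j$, and must then invoke the full \RZ\ counting machinery of \cite{RT07} to sum $|J|$ such disjunctions without exponential blowup, followed by a separate lower-bound derivation and an intersection step to kill the negative disjuncts. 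Both arguments work; the paper's buys simplicity by exploiting that once $\sum_j y_j$ is fixed to a single value, all subsequent lines are equations rather than disjunctions, whereas your approach is more mechanical and shows the result follows from the generic summation lemma alone. One minor point: in your step ``sum the Boolean axioms for $x_iy_j$'', note that $x_iy_j\in\{0,1\}$ is not itself a Boolean axiom of \RZQ\ and must be derived by a short case analysis on $y_j$ from the axioms $y_j-x_iy_j\in\{0,1\}$ and $x_i+y_j-x_iy_j\in\{0,1\}$---the same implicit step the paper also takes.
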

\begin{proof}
We can reason in a case-by-case manner as follows (see \cite{RT07} on how to carry out informal case-analysis reasoning inside \RL): assume that $\sum_{j\in J}y_j=a$, for $a\in\{0,1,\ldots,m\}$. We wish to show that $x_1\cd\sum_{j\in J}y_j=ax_1$. If $x_1=0$ then $x_1\cd\sum_{j\in J}y_j=0=ax_1$. Otherwise, $x_1=1$. Then, $x_1\cd \sum_{j\in J}y_j=\sum_{j\in J}y_j=a=ax_1$. Since we have the axiom $(x_1=0)\lor(x_1=1)$ we conclude that     $x_1\cd \sum_{j\in J}y_j=a x_1.$
In a similar way we can derive for all $i\in I$:
\begin{equation}\label{eq:037}
x_i\cd \sum_{j\in J}y_j=ax_i.
\end{equation}
And by adding (\ref{eq:037}) for all $i\in I$ we obtain:
\[
\sum_{i\in I} x_i\cd \sum_{j\in J}y_j= a\cd\sum_{i\in I} x_i.
\]
Now using the axiom $\sum_{i\in I} x_i\in\{0,1,\ldots,n\},$ we get
\begin{equation}\label{eq:039}
\sum_{i\in I} x_i\cd \sum_{j\in J}y_j\in\{0,a,2a,\ldots,n\cd a\}.
\end{equation}
Recall that (\ref{eq:039}) was obtained under the assumption that $\sum_{j\in J}y_j =a$. This means that if we have the axiom $\sum_{j\in J}y_j \in\{0,1,\ldots,m\}$, we can obtain:
\begin{equation*}
\sum_{i\in I} x_i\cd \sum_{j\in J}y_j\in\left\{b\cd c \;|\; b\in\{0,1,..,n\} \hbox{ and } c\in\{0,1,\ldots,m\}\right\}=\{0,1,\ldots,n\cd m\}.
\end{equation*}
\end{proof}

Note that the proof of Lemma \ref{lem:r-quad-count} would also work if instead of the sums $\sum_{i\in I}x_i$ or $\sum_{j\in J}y_j$ we have $\sum_{i\in I}b_ix_i$ or $\sum_{j\in J}c_jy_j$, for integers $b_i, c_j$.

\section{Reduction to weak automatizability of \RZ}
Here we show that \RZ\ is weakly automatizable if and only if \RZQ\ is weakly automatizable.

To show that \RZ\ is  weakly automatizable iff \RZQ\ is weakly automatizable we use a similar idea to Pudl\'ak \cite{Pud03-tcs}. Namely, we show that the \emph{canonical pair} of \RZQ\ is polynomially reducible to the canonical pair of \RZ.

\begin{definition}[\cite{Razb94}]
The \emph{canonical pair} of a refutation system $\mathcal P$ is the disjoint \NP-pair, whose first \NP\ language consists of all pairs $(\tau,1^m)$ where $\tau$ is an unsatisfiable formula that has a $\mathcal P$-refutation of size at most $m$, and whose second \NP\ language is the set of pairs $(\mu,1^m)$ where $\mu$ is a  satisfiable formula and $m$ is some natural number.
\end{definition}

We say that a canonical pair $(A,B)$ of a refutation system $\mathcal  P'$ is \emph{polynomially reducible} to the canonical pair $(A',B')$ of another refutation system  $\mathcal P$ if there is a polynomial-time computable function $f$ such that for all $x$ it holds that $x\in A \iff f(x)\in A'$ and $x\in B \iff f(x)\in B'$. A simple corollary of the above definitions is the following:

\begin{proposition}[\cite{Pud03-tcs}]
If the canonical pair of $\mathcal P'$ is polynomially reducible to the canonical pair of $\mathcal P $ then $\mathcal P'$ is weakly automatizable if $\mathcal P$ is weakly automatizable.
\end{proposition}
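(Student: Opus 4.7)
The plan is to invoke the characterization, due to Pudl\'ak \cite{Pud03-tcs}, that a refutation system $\mathcal P$ is weakly automatizable if and only if its canonical pair $(A_{\mathcal P},B)$ admits a polynomial-time separator; that is, there is a polynomial-time function $S_{\mathcal P}$ such that $S_{\mathcal P}(x)=1$ for every $x\in A_{\mathcal P}$ and $S_{\mathcal P}(x)=0$ for every $x\in B$. Given this equivalence, the proposition reduces to the observation that polynomial-time separability of disjoint \NP-pairs is preserved under polynomial-time many-one reductions.

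Concretely, suppose $\mathcal P$ is weakly automatizable, and let $S_{\mathcal P}$ be a polynomial-time separator for the canonical pair $(A_{\mathcal P},B)$ of $\mathcal P$. Let $f$ be the polynomial-time reduction from the canonical pair $(A_{\mathcal P'},B')$ of $\mathcal P'$ to $(A_{\mathcal P},B)$, so that $x\in A_{\mathcal P'}\Leftrightarrow f(x)\in A_{\mathcal P}$ and $x\in B'\Leftrightarrow f(x)\in B$. I would define the candidate separator for $(A_{\mathcal P'},B')$ by composition:
\[
                S_{\mathcal P'}(x) \;:=\; S_{\mathcal P}(f(x)).
\]
Because $f$ and $S_{\mathcal P}$ are both polynomial-time, so is $S_{\mathcal P'}$. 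If $x\in A_{\mathcal P'}$ then $f(x)\in A_{\mathcal P}$, hence $S_{\mathcal P'}(x)=S_{\mathcal P}(f(x))=1$; and if $x\in B'$ then $f(x)\in B$, hence $S_{\mathcal P'}(x)=0$. Thus $S_{\mathcal P'}$ is a polynomial-time separator for the canonical pair of $\mathcal P'$.

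Applying the characterization once more, this time in the reverse direction, yields that $\mathcal P'$ is weakly automatizable, as required. The only non-trivial ingredient is the Pudl\'ak characterization relating weak automatizability to feasible separability of the canonical pair; once that is in hand, the argument is essentially a one-line composition of $S_{\mathcal P}$ with $f$, so I do not expect any genuine obstacle. The only point to be careful about is to verify that the notion of ``polynomial-time reducibility'' between canonical pairs used here is the one compatible with Pudl\'ak's characterization (i.e., a many-one reduction that maps the two components into the corresponding two components), which is precisely the definition stated just before the proposition.
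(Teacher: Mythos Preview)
Your proof is correct and follows the standard argument from Pudl\'ak \cite{Pud03-tcs}. Note that the paper itself does not give a proof of this proposition at all: it is stated with a citation to \cite{Pud03-tcs} and prefaced by ``A simple corollary of the above definitions is the following,'' so there is nothing to compare against beyond the fact that your argument is exactly the intended one.
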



In view of this proposition, and since \RZQ\ clearly polynomially simulates \RZ\ (as an extension of it),   it remains to show the following:

\begin{proposition}
The canonical pair of \RZQ\ is polynomially reducible to the canonical pair of \RZ.
\end{proposition}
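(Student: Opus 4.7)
The plan is to mimic Pudl\'ak's linearization from \cite{Pud03-tcs}: introduce a fresh extension variable for every quadratic monomial together with linear axioms that are the direct transcriptions of the R(quad) Boolean axioms for that monomial. Given $(\tau,1^m)$ with $\tau$ a set of disjunctions of quadratic equations in $x_1,\dots,x_n$, I construct $\tau'$ over the enlarged variable set $\{x_1,\dots,x_n\}\cup\{y_{ij}:1\le i\le j\le n\}$ (identifying $y_{ji}$ with $y_{ij}$) as follows. Replace every monomial $x_ix_j$ occurring in $\tau$ by $y_{ij}$, turning each quadratic disjunction into a linear one, and add, for every pair $i,j\in[n]$, the three axioms
\[
x_i+x_j-y_{ij}\in\{0,1\},\quad x_i-y_{ij}\in\{0,1\},\quad x_j-y_{ij}\in\{0,1\},
\]
which are precisely the R(quad) Boolean axioms for $x_ix_j$ after the substitution $x_ix_j\mapsto y_{ij}$. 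Set $m':=m+p(|\tau|)$ for a suitable polynomial $p$; the map $f:(\tau,1^m)\mapsto(\tau',1^{m'})$ is computable in polynomial time since only $O(n^2)$ new variables and axioms are introduced.

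For the refutation side, given an R(quad) refutation $\pi$ of $\tau$ of size at most $m$, apply the substitution $x_ix_j\mapsto y_{ij}$ line-by-line. Axioms of $\tau$ map to axioms of $\tau'$, R(quad)'s Boolean axioms on quadratic terms map to axioms of $\tau'$ by construction, and the Boolean axioms $x_i\in\{0,1\}$ are common to both systems; the resolution, weakening and simplification rules act on equations in a purely linear way and are thus preserved by the substitution. Hence the translated proof $\pi'$ is an R(lin) refutation of $\tau'$ of essentially the same size, at most $m'$. Conversely, given an R(lin) refutation $\pi'$ of $\tau'$ of size at most $m'$, the reverse substitution $y_{ij}\mapsto x_ix_j$ sends the added axioms back to R(quad) Boolean axioms and every other line to a valid R(quad) line, producing an R(quad) refutation of $\tau$ of comparable size.

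For the satisfiability side, under the natural $0,1$ semantics where the monomial $x_ix_j$ is interpreted as the product $x_i\cdot x_j$, any satisfying assignment $\alpha$ of $\tau$ extends to $\tau'$ by setting $y_{ij}:=\alpha(x_i)\alpha(x_j)$; a direct case analysis on $\alpha(x_i),\alpha(x_j)\in\{0,1\}$ shows that all three added axioms are satisfied, while the translated equations hold because they agree with the original ones. Conversely, in any satisfying assignment of $\tau'$ the Boolean axiom $y_{ij}\in\{0,1\}$ together with the three added axioms forces $y_{ij}=\alpha(x_i)\alpha(x_j)$ in each of the four Boolean cases, and reading $x_ix_j$ as $y_{ij}$ yields a satisfying assignment of $\tau$.

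The main subtlety, and where care is required, is the choice of axioms for the extension variables. They must mirror R(quad)'s Boolean axioms for $x_ix_j$ exactly; in particular, the symmetric axiom $x_j-y_{ij}\in\{0,1\}$ (corresponding to R(quad)'s axiom $x_j-x_jx_i\in\{0,1\}$ for the ordered pair $(j,i)$) is essential, as without it $y_{ij}$ would not be uniquely pinned down to $x_ix_j$ in the case $\alpha(x_i)=1,\alpha(x_j)=0$, and the satisfiability equivalence would fail. With all three axioms present, both the line-by-line translation of refutations and the transfer of satisfying assignments go through in both directions, giving the required polynomial reduction of the canonical pairs.
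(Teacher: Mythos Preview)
Your argument is correct and follows essentially the same approach as the paper: linearize by introducing an extension variable for each product $x_ix_j$ and add the three linear axioms that mirror the \RZQ\ Boolean axioms for that product, exactly as the paper does in Section~\ref{sec:encoding} part~\ref{it:prod-variable}. Your added care about the symmetric axiom $x_j-y_{ij}\in\{0,1\}$ and the explicit verification of both the refutation-translation and the satisfiability-equivalence directions simply flesh out what the paper leaves as a sketch.
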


\begin{proof}(Sketch)
Similar to \cite{Pud03-tcs}, the idea is to encode a product of any two variables $x_i\cd x_j$ as a new single formal variable $x_{ij}$. Thus, the reduction sends all pairs $(\tau,1^m)$ to the pair
$(\tau',1^{{\rm poly}(m)})$, where $\tau'$ is obtained from $\tau$ by adding the axioms that force all new variables $x_{ij}$ to encode the product $x_i\cd x_j$, as shown in Section \ref{sec:encoding}. \end{proof}

\setcounter{tmpc}{\thetheorem}
\setcounter{theorem}{\thecD}
\begin{corollary}
\RZQ\ is weakly automatizable iff\, \RZ\ is weakly automatizable.
\end{corollary}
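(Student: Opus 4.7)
The plan is to prove the corollary as a direct consequence of the two ingredients already assembled just above the statement: (i) the Proposition giving a polynomial reduction of the canonical pair of \RZQ\ to the canonical pair of \RZ, and (ii) the fact that \RZQ\ polynomially simulates \RZ\ (being a direct extension of it by quadratic equations and the extra Boolean axioms). Together with Pudl\'ak's Proposition relating reductions of canonical pairs to weak automatizability, these yield both directions.

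For the direction ``\RZ\ weakly automatizable $\Longrightarrow$ \RZQ\ weakly automatizable'', I would simply invoke the Proposition whose proof sketch is given: the canonical pair of \RZQ\ reduces in polynomial time to that of \RZ\ via the substitution of each product $x_i\cd x_j$ by a fresh extension variable $x_{ij}$, augmented with the axioms from Section \ref{sec:encoding} forcing $x_{ij}$ to behave like $x_i\cd x_j$. Combined with Pudl\'ak's Proposition \cite{Pud03-tcs}, weak automatizability transfers from \RZ\ to \RZQ.

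For the converse, ``\RZQ\ weakly automatizable $\Longrightarrow$ \RZ\ weakly automatizable'', I would argue directly from the definition of weak automatizability. If \RZQ\ is weakly automatizable, there is an automatizable system $\mathcal Q$ that polynomially simulates \RZQ. Since \RZQ\ polynomially simulates \RZ\ (any \RZ-refutation is already an \RZQ-refutation, after trivially interpreting its proof-lines as disjunctions of quadratic equations with zero quadratic coefficients and noting that the \RZ\ Boolean axioms are a subset of the \RZQ\ ones), composing the simulations shows that $\mathcal Q$ polynomially simulates \RZ\ as well. Hence \RZ\ is weakly automatizable.

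No real obstacle is expected: both directions are immediate once one has the reduction of canonical pairs in hand and records the obvious simulation of \RZ\ by \RZQ. The only point that needs a line of care is checking that the reduction in the Proposition maps the two components of the canonical pair correctly (unsatisfiable with short refutation $\mapsto$ unsatisfiable with short refutation; satisfiable $\mapsto$ satisfiable), which follows because the extension axioms for $x_{ij}$ are conservative: any satisfying assignment to $\tau$ extends uniquely to a satisfying assignment of $\tau'$ by setting $x_{ij}:=x_i\cd x_j$, and conversely any satisfying assignment of $\tau'$ restricts to one for $\tau$.
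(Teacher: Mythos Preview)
Your proposal is correct and follows essentially the same approach as the paper: the paper sets up exactly the two ingredients you use (Pudl\'ak's Proposition together with the canonical-pair reduction for one direction, and the trivial polynomial simulation of \RZ\ by \RZQ\ for the other) and then states the corollary without further argument. Your write-up simply makes explicit the two-line deduction the paper leaves implicit.
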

\setcounter{theorem}{\thetmpc}

Since \RZQ\ admits polynomial-size refutations of the 3XOR principle, and since weak automatizability entails feasible interpolation, we get a reduction of the problem  of determinizing Feige et al.~nondeterministic refutation algorithm to the problem of establishing weak automatizability of \RZ:
\setcounter{tmpc}{\thetheorem}
\setcounter{theorem}{\thecE}
\begin{corollary}
If \,\RZ\ is weakly automatizable then there is a deterministic refutation algorithm for random 3CNFs with $\Omega(n^{1.4})$ clauses.
\end{corollary}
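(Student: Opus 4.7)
The plan is to chain together results already established in the paper: the preceding corollary giving the equivalence between weak automatizability of \RZ\ and \RZQ, the standard implication that weak automatizability entails feasible interpolation, the polynomial-size \RZQ-refutations of the 3XOR principle formulas $\Upsilon_n$ from Theorem \ref{thm:RZQ-proofs}, and the main reduction in Theorem \ref{cor:main}.

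More concretely, I would first assume that \RZ\ is weakly automatizable. Applying the preceding corollary---itself proved via a polynomial-time reduction of the canonical pair of \RZQ\ to that of \RZ\ which replaces each product $x_i\cd x_j$ by a fresh extension variable constrained to equal $x_i\cd x_j$---I conclude that \RZQ\ is also weakly automatizable. Next, I would invoke the general implication from \cite{BPR00,Pud03-tcs}, already used throughout this paper, that weak automatizability of a propositional proof system implies that it has feasible interpolation; applied to \RZQ, this yields that \RZQ\ has feasible interpolation.

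Finally, I would combine the feasible interpolation of \RZQ\ with Theorem \ref{thm:RZQ-proofs}, which supplies polynomial-size \RZQ-refutations of the family $\Upsilon_n$, and instantiate Theorem \ref{cor:main} with $\mathcal P$ taken to be \RZQ, and with both $s(n)$ and $T(n)$ polynomial in $n$. The conclusion is a deterministic polynomial-time refutation algorithm for random 3CNFs with $n$ variables and $\Omega(n^{1.4})$ clauses, as required.

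I do not expect any step to present a substantive obstacle, since each ingredient is already in place; the only matter of care is bookkeeping sizes through the chain. One must verify that the polynomial-time canonical-pair reduction from \RZQ\ to \RZ\ indeed transports weak automatizability of \RZ\ into feasible interpolation of \RZQ\ with \emph{polynomial} size bounds on the interpolants of refutations of the specific family $\Upsilon_n$, so that the hypothesis of Theorem \ref{cor:main} is met with the correct polynomial running time. This composition is straightforward because every reduction in the chain is polynomial-time and polynomial-size, so the required bounds propagate without loss.
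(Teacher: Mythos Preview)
Your proposal is correct and follows exactly the same chain of implications as the paper: from weak automatizability of \RZ\ to that of \RZQ\ (via the preceding corollary), then to feasible interpolation of \RZQ, then combine with Theorem~\ref{thm:RZQ-proofs} and apply Theorem~\ref{cor:main}. The paper gives no more detail than the one-sentence justification preceding the corollary, so your write-up is, if anything, more explicit than the original.
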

\setcounter{theorem}{\thetmpc}

\section*{Acknowledgments}
I wish to thank Jan Kraj\'{i}\v{c}ek for useful comments related to this work and Albert Atserias and Neil Thapen for useful related discussions. Thanks also to the anonymous reviewers of this paper who helped much in improving the exposition of this work.

\bibliographystyle{plain}
\bibliography{PrfCmplx-Bakoma}
\end{document}